\documentclass[A4paper]{article}

\usepackage[T1]{fontenc}
\usepackage[utf8]{inputenc}
\usepackage[USenglish]{babel}

\usepackage[margin=1.3in]{geometry}
\usepackage{microtype}

\usepackage{mathptmx}

\usepackage{verbatim}
\usepackage{xspace}
\usepackage{soul}
\usepackage{datetime}
\usepackage{adjustbox}
\usepackage[inline]{enumitem}

\usepackage{xcolor}
\definecolor{linkBlue}{RGB}{15, 60, 153} 
\definecolor{linkRed}{RGB}{202, 57, 41}
\definecolor{linkMagenta}{RGB}{139, 0, 139}

\usepackage{mathtools}
\usepackage{amsfonts}
\usepackage{amssymb}
\usepackage{amsthm}

\usepackage{graphicx}
\usepackage{rotating}
\usepackage{standalone}
\usepackage{pgf,tikz,environ}

\usetikzlibrary{
  arrows,
  arrows.meta,
  backgrounds,
  calc,
  cd,
  decorations.pathmorphing,
  decorations.pathreplacing,
  fadings,
  fit,
  positioning,
  shapes.geometric
}

\usepackage{todonotes}

\usepackage{sectsty}
\usepackage{xpatch}

\newcommand{\secfont}{\fontfamily{lmss}\selectfont}
\allsectionsfont{\secfont}

\newtheoremstyle{classy}
  {1em}       
  {\topsep}   
  {}          
  {}          
  {\bfseries} 
  {.}         
  {.6em}      
  {}          

\theoremstyle{classy}
\swapnumbers
\xpatchcmd\swappedhead{~}{.~}{}{}

\newtheorem{body}{}
\numberwithin{body}{section}

\newtheorem{observation}[body]{\secfont Observation}
\newtheorem{claim}[body]{\secfont Claim}

\newtheorem{corollary}[body]{\secfont Corollary}
\newtheorem{definition}[body]{\secfont Definition}

\newtheorem{lemma}[body]{\secfont Lemma}

\newtheorem{remark}[body]{\secfont Remark}

\newtheorem{theorem}[body]{\secfont Theorem}

\usepackage{url}
\usepackage{hyperref}

\hypersetup{
  colorlinks = true,
  linkcolor = linkBlue,
  citecolor = linkRed,
  urlcolor  = linkMagenta,
  filecolor = linkMagenta
}

\usepackage{orcidlink}
\usepackage[nameinlink]{cleveref}

\newcommand{\debug}[1]{}

\newcommand{\Nat}{\ensuremath{\mathbb{N}}}
\newcommand{\eps}{\varepsilon}

\newcommand{\plgn}{\ensuremath{\mathrm{poly}(\log n)}}
\newcommand{\plglgn}{\ensuremath{\mathrm{poly}(\log\log n)}}

\newcommand{\cD}{\ensuremath{\mathcal{D}}}

\newcommand{\cG}{\ensuremath{\mathcal{G}}}

\newcommand{\cP}{\ensuremath{\mathcal{P}}}

\newcommand{\skel}{\ensuremath{\mathrm{Skel}(\hat T)}}

\title{Breadth-First Search in Succinct Planar Graphs}

\author{
  Johannes Meintrup\,\orcidlink{0000-0003-4001-1153}%
  \thanks{THM University of Applied Sciences Mittelhessen,
  Giessen, Germany.
  Email: \href{mailto:johannes.meintrup@mni.thm.de}
  {\texttt{johannes.meintrup@mni.thm.de}}.
  Funded by the Deutsche Forschungsgemeinschaft
  (DFG, German Research Foundation) -- 571642628.}
}
\date{}
\begin{document}

\maketitle

\begin{abstract}
An encoding for a graph class $\cG$ is called succinct if it uses
$Z_{\cG}(n)+o(Z_{\cG}(n))$ bits, where $Z_{\cG}(n)$ is the
information-theoretic minimum number of bits needed to distinguish between
$n$-vertex graphs of $\cG$; it is called compact if it uses
$O(Z_{\cG}(n))$ bits. For planar graphs, $Z_{\cG}(n)=\Theta(n)$.
We present a succinct encoding of planar graphs that supports executing a
breadth-first search directly on the encoding. The succinct encoding can be
constructed in expected $O(n)$ time using $O(n)$ bits during construction; a
compact variant can be constructed in deterministic $O(n)$ time using $O(n)$
bits. Once the encoding is constructed, a BFS from any start vertex can be
computed in $O(n)$ time using $o(n)$ additional bits, including the space
needed to represent the BFS tree. The resulting BFS tree $T$ remains available
for standard tree operations, such as traversal, parent and child queries,
layer queries, and lowest common ancestor queries, in constant time per query
or output element. The encoding also supports standard graph queries. For plane
graphs $G=(V, E)$, we provide traversal of the interdigitating tree $\hat T$,
i.e., the spanning tree of the dual graph whose edges correspond to $E
\setminus E(T)$. 

As our main application, we implement the well-known planar
separator theorem in a space-efficient way. For biconnected plane graphs, our
encoding allows us to compute a balanced separator of size $O(\sqrt n)$ in
$O(n)$ time using $o(n)$ additional bits. Along the way, we show that
biconnected plane graphs encoded by our representation can be triangulated in
expected $O(n)$ time and $o(n)$ bits in the succinct variant, or in
deterministic $O(n)$ time using $O(n)$ bits in the compact variant. Further
applications include computation of a tree decomposition of width $O(d)$ where
$d$ is the diameter of the plane graph at hand and testing for bipartiteness.
Finally, all results that do not rely on a plane embedding generalize to
separable graph classes.
\end{abstract}

\setcounter{page}{0}
\thispagestyle{empty}
\newpage
\section{Introduction}
\label{sec:introduction}

\body{Breadth-first search (BFS) is, alongside depth-first search (DFS), one of
    the most fundamental graph traversal algorithms. Throughout the paper, we
    use $n$ to denote the number of vertices of a graph and $m$ to denote the
    number of edges. A standard BFS running in $O(n+m)$ time stores a queue of
    vertices and a visitation status for every vertex, and therefore uses
    $\Theta(n\log n)$ bits in the word-RAM model. }

\body{Several works study space-efficient BFS and DFS algorithms that use
    (almost) linear space in bits while retaining (almost) linear running
    time~\cite{AsanoIKKOOSTU14,BanerjeeCRS18,ElmasryHK15,Hagerup20}. Related
    work on reachability pushes the space below linear, at the cost of large
    polynomial running time~\cite{Reingold08,IzumiO20}. The space-efficient BFS
and DFS algorithms above primarily output the vertices in visitation order;
they do not keep the resulting BFS or DFS tree available as a data structure.}

\body{This missing structural access is a serious limitation when a traversal
    is used as a subroutine in more complex applications. Standard graph
    algorithms often rely not only on the order in which vertices are visited,
    but also on direct access to the resulting tree: parent and child queries,
    depths or layers, ancestor queries, and more. In the space-efficient
    setting, such information cannot simply be stored for all vertices.
    Consequently, space-efficient algorithms using BFS or DFS as subroutines
    often require substantial modifications to work around this lack of
    queries. For example, space-efficient algorithms for biconnected
    components~\cite{ChakrabortyRS16,KammerKL19} require non-trivial
modifications of the classical DFS-based algorithm of Tarjan~\cite{HopcroftT73}
which normally has direct access to DFS numbers, depths, and so-called
lowpoints.} 

\body{A recent result of Elberfeld et al.~\cite{ElberfeldKM25} addresses
    exactly this issue for DFS on planar and, more generally, separable graphs.
    Their work builds on the succinct encoding of Blelloch and
    Farzan~\cite{BlellochF10}, which, roughly speaking, decomposes the graph
    into very small subgraphs represented by indices into lookup tables; these
    subgraphs overlap only in a few so-called boundary vertices. By augmenting
    the lookup table with states of a DFS computation, and by storing data
explicitly for the boundary vertices, Elberfeld et al. compute a DFS directly
on the encoding using only $o(n)$ additional bits. The resulting DFS tree
remains available for standard tree queries. }

\body{We provide the analogous structure for BFS. After a modified version of
    the succinct encoding of Blelloch and Farzan~\cite{BlellochF10} has been
    constructed, a BFS from an arbitrary start vertex can be executed in $O(n)$
    time using only $o(n)$ additional bits, and the resulting BFS tree remains
    available for direct access through common queries. The main technical
    difference is that DFS is highly local with respect to the tiny subgraphs
    used by the encoding. Once a DFS enters such a subgraph, a table lookup can
    advance the DFS through the subgraph in $O(1)$ time. This locality is what
    allows the DFS algorithm of Elberfeld et al.~\cite{ElberfeldKM25} to run in
    time $O(n/\log^c\log n)$, for some constant $c$, after $O(n)$ preprocessing
    time. For BFS, the situation is different: a BFS tree has to preserve
    shortest-path distances from the root. Thus, the search cannot simply
    exhaust a tiny subgraph locally, since a later entry through another
    boundary vertex may give shorter paths to some of its vertices.
    Nevertheless, we show that the BFS tree can be computed in $O(n)$ time and
    kept available inside the encoding using only $o(n)$ additional space. Our
result fits into the broader line of succinct graph encodings; see,
e.g.,~\cite{AcanCJS19,AleardiCDS06,ChakrabortyS2025,ChiangLL01,RichardRR06,HeMNYWW20,MunroNBW21,RamanRRS07}.}

\body{Our main result is a succinct encoding for planar graphs that supports BFS
directly on the encoding while keeping the computed BFS tree available for
queries.}
\begin{theorem}\label{thm:main}
Let $G=(V,E)$ be a connected planar graph. There is a succinct encoding of $G$
that provides the following.
For a given vertex $s\in V$, we can compute
and encode a BFS tree $T$ rooted at $s$ in $O(n)$ time using $o(n)$ additional
bits. 
After the BFS has been computed, the following operations on $T$ are supported:
\begin{itemize}
\item traverse $T$ in preorder, postorder, or BFS order;
\item output the BFS layer $\ell(u)$ of a vertex $u$;
\item output the parent of a vertex $u$ in $T$;
\item iterate over the children of a vertex $u$ in $T$;
\item test whether an edge is a tree edge or a non-tree edge;
\item compute the lowest common ancestor in $T$ of two vertices.
\end{itemize}
The encoding additionally supports adjacency, degree, and neighborhood queries.
All operations run in $O(1)$ time (per output element). Moreover, the encoding
can be reset to its initial state in $o(n)$ time.
\end{theorem}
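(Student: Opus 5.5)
We build on the Blelloch--Farzan style encoding: a two-level decomposition of $G$ via the separator theorem into \emph{mini-pieces} of $\Theta(\log n)$ and \emph{micro-pieces} of $\Theta(\log\log n)$ vertices (more precisely, polylogarithmic and $\Theta(\log n/\log\log n)$-type sizes chosen so that all micro-pieces together with auxiliary data fit into a lookup table of $o(n)$ bits). Boundary vertices across all micro-pieces number $O(n/\sqrt{\log\log n})$ and are stored explicitly with labels; each micro-piece is stored as an index into a table of all small planar graphs. Succinctness follows from the standard counting argument: the table indices sum to $Z_{\cG}(n)+o(n)$ bits, everything else is $o(n)$. Adjacency, degree and neighborhood queries are then inherited from the Blelloch--Farzan encoding and the known extensions. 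We augment the table so that for each micro-piece type, each subset of its boundary vertices and each assignment of small \emph{relative} distance offsets to them, the table stores the resulting local BFS layering and a canonical local BFS forest.

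The BFS is run in two phases. \textbf{Phase 1 (distances on boundary vertices).} We build the (implicit) graph on boundary vertices in which two boundary vertices of the same micro-piece are joined with weight equal to their distance inside the piece (a table lookup). Distances in this graph from $s$ (after treating $s$ as boundary) equal $d_G$ on boundary vertices. Because weights are small integers and layers of a BFS change distance by $1$ per edge, we run a bucketed Dijkstra/BFS over layers: we keep for each layer only the set of boundary vertices reached, processed in increasing layer order; with $O(n/\sqrt{\log\log n})$ boundary vertices each storing $O(\log n)$ bits this is $o(n)$ bits, and each micro-piece is activated $O(\text{\#boundary})$ times, each activation $O(1)$ via lookup, giving $O(n)$ total time. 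The subtle point, which I expect to be the \textbf{main obstacle}, is exactly the one noted in the introduction: a piece may be entered from several boundary vertices at different times. We handle it by not committing interior vertices in Phase 1 at all; only boundary distances are fixed, and a piece is re-examined whenever one of its boundary vertices is finalized, which is bounded by its boundary size. To avoid $\Theta(\log n)$-bit distances per boundary vertex in the budget we may store distances modulo $3$ (plus explicit values at mini-level boundaries), since within a piece only relative offsets of bounded size matter.

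\textbf{Phase 2 (encoding $T$).} For each micro-piece, given the boundary distances (normalized to the minimum, bounded by the piece diameter), the interior distance is $\min_b (d(s,b)+d_P(b,v))$, provided every shortest path decomposes through boundary vertices, which holds since any path leaving the piece does so through a boundary vertex. The table therefore yields, per piece, the layering and a canonical parent choice for every interior vertex; we store per piece a pointer to this table entry ($O(\log\log n)$ bits per piece, $o(n)$ overall) and parents of boundary vertices explicitly. Parent, layer (offset plus stored base), tree-edge test and child iteration follow in $O(1)$. For preorder/postorder traversal and LCA we additionally compute a balanced-parentheses representation of $T$ in $o(n)$ bits: we traverse $T$ via parent/child queries in $O(n)$ time, writing only the traversal structure at boundary vertices and per-piece table references otherwise, then apply standard succinct tree structures (rank/select, range-min) over it, using $o(n)$ bits. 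BFS-order traversal iterates layers using the Phase-1 layer buckets. Reset in $o(n)$ time is achieved by keeping all BFS-specific data in $o(n)$-bit arrays that are cleared or replaced wholesale, leaving the base encoding untouched.
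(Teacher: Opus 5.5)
\textbf{Verdict: genuine gap in the $o(n)$-bit space bound; the algorithmic route itself is sound.} Your two-phase route is a legitimate alternative to the paper's layer-synchronous processing of active pieces with table-swap states, and it handles the re-entry issue correctly. Phase~1 computes exact distances on the boundary vertices by a bucketed Dijkstra over intra-piece distances; Phase~2 obtains interior layers and parents by table lookup.

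The problem is the $o(n)$-bit budget, which is the heart of the theorem. Your estimate that $O(n/\sqrt{\log\log n})$ boundary vertices with $O(\log n)$ bits each use $o(n)$ bits is false: this is $\Theta(n\log n/\sqrt{\log\log n})$. Since tabulated micro pieces have $r_1=O(\log n)$, storing $\Theta(\log n)$ bits per micro-boundary label always costs $\Omega(n\sqrt{\log n})$ bits. This affects every micro-level item you keep with global values:
\begin{itemize}
\item the tentative distances;
\item the bucket lists, which hold global vertex ids (for a wheel centered at $s$, layer $1$ already contains $\Theta(n/r_1)$ boundary vertices, i.e., $\Theta(n\log n/r_1)=\omega(n)$ bits);
\item the explicitly stored parents of micro-boundary vertices;
\item the absolute ``stored base'' per micro piece ($\Theta(\log n)$ bits for each of $\Theta(n/r_1)$ pieces).
\end{itemize}
The mod-$3$ patch does not help, for three reasons. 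In a weighted bucketed Dijkstra the unfinalized tentative distances spread over a window as wide as the micro-piece diameter. The Phase~2 lookup needs the exact differences $d(s,b)-d(s,b')$ within a piece. And $\ell(u)$ must afterwards be reported exactly in $O(1)$ time. Residues mod $3$ plus values at mini-boundary vertices determine none of these. (Normalizing by the minimum boundary distance also tacitly assumes connected pieces, which $r$-divisions do not guarantee.)

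The missing idea is the paper's two-level addressing. Only the $O(n/\plgn)$ mini-boundary vertices may carry $\Theta(\log n)$-bit layers and parents; every micro-level datum must be an $O(\log\log n)$-bit local quantity. The paper stores the parent of a micro-boundary vertex $w$ of $P_i$ by its mini label. It stores $\ell(w)$ as an offset from the closest mini-boundary vertex $u$ on the root-to-$w$ path of $T$, referenced by its mini label. Because the tree path from $u$ to $w$ contains no further mini-boundary vertex, it stays inside one mini piece, so $0\le \ell(w)-\ell(u)\le r_0$ (the start piece is treated explicitly). The active structures are likewise lists of mini-piece ids, each with a local list of micro-piece ids. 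With such a reference scheme, tentative offsets modulo $r_1+1$ rather than $3$, and two-level bucket lists, your two phases can be made to fit in $o(n)$ bits.

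Three smaller points:
\begin{itemize}
\item Your Phase~1 buckets are circular and contain only boundary vertices, so BFS-order output needs a rerun that also emits interior vertices layer by layer; the paper simply reruns its BFS.
\item An explicit balanced-parentheses string has $2n$ bits, so navigation must stay implicit inside the pieces as in Elberfeld et al.~\cite{ElberfeldKM25}, which is what the paper invokes.
\item An activation is not $O(1)$: finalizing one boundary vertex of $P$ relaxes up to $|\partial P|$ others, giving $\sum_P|\partial P|^2$ work. This is $O(n)$ only because you assume $O(\sqrt{r_1})$ boundaries rather than the relaxed $O(r_1^{1/2+\eps})$ ones the paper uses for its $O(n)$-bit construction.
\end{itemize}
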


\body{As in the succinct encoding of Blelloch and Farzan~\cite{BlellochF10} and
    its DFS augmentation by Elberfeld et al.~\cite{ElberfeldKM25}, we
    distinguish between succinct and compact variants. The distinction comes
    from the use of compressed indexable dictionaries~\cite{RamanRRS07}:
    replacing them by simpler compact dictionaries gives deterministic linear
construction time, but only a compact encoding.}

\begin{corollary}\label{cor:construction}
    The succinct encoding of Theorem~\ref{thm:main} can be constructed in expected
    $O(n)$ time using $O(n)$ bits during construction. A compact variant can be
    constructed in deterministic $O(n)$ time using $O(n)$ bits during construction.
\end{corollary}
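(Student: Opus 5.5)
We obtain the corollary by following the construction of the encoding of Theorem~\ref{thm:main} step by step and bounding the time and working space of each step. The encoding has three parts: (i) the recursive separator-based decomposition of $G$ into mini-pieces and micro-pieces in the style of Blelloch and Farzan~\cite{BlellochF10}; (ii) the lookup tables over all micro-piece types, augmented with BFS-related state information; and (iii) the rank/select and dictionary structures that map boundary vertices between levels, together with the explicitly stored $o(n)$-bit data for boundary vertices. The only randomized component is (iii), through the compressed indexable dictionaries of Raman et al.~\cite{RamanRRS07}. For these, the expected-linear construction time is known. Replacing them with simple bit vectors with rank/select over the universe gives deterministic linear construction. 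It also inflates the size to $O(n)$ bits, so the encoding is compact rather than succinct.

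\textbf{Decomposition.} I would use a space-efficient linear-time separator routine for planar graphs, for instance the one underlying the DFS encoding of Elberfeld et al.~\cite{ElberfeldKM25}. With it, $G$ is partitioned into pieces of size $\mathrm{poly}(\log n)$ and then $\mathrm{poly}(\log\log n)$, with $O(n/\sqrt{\log n})$ and $O(n/\sqrt{\log\log n})$ boundary vertices respectively. The work must fit in $O(n)$ bits. Since $G$ is planar, $m=O(n)$, so the input adjacency structure itself fits in $O(n)$ words. We therefore convert it level by level: on each piece of size $\mathrm{poly}(\log n)$ we may spend $O(\log n)$-bit pointers locally, because only one piece is processed at a time. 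Reusing that scratch space keeps the working space at $O(n)$ bits. The cost at each level is linear in the piece sizes, and the separator hierarchy has only two levels of nontrivial size, so the total time is $O(n)$.

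\textbf{Tables and auxiliary structures.} There are at most $2^{O(\mathrm{poly}(\log\log n))}=o(n^{\epsilon})$ micro-piece types. Each table entry, including the BFS state transitions and the precomputed tree navigation answers, can be computed by brute force in $\mathrm{poly}(\log\log n)$ time. The tables therefore take $o(n)$ time and bits. Each micro-piece is canonically encoded by a table index, found in time linear in its size via a canonical ordering, i.e., by hashing its adjacency matrix into a word. The remaining boundary bookkeeping consists of rank/select on $o(n)$-bit vectors plus the $O(1)$-time tree structures for LCA, preorder and postorder that are built after the BFS. Building these is standard and takes $O(n)$ time and $o(n)$ bits. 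The $O(n)$-bit vectors used in the compact variant also take linear time.

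\textbf{Main obstacle.} The delicate step is keeping the decomposition within $O(n)$ bits, as opposed to $O(n\log n)$, while remaining linear time. I would handle it by computing the top-level separator decomposition with the known $O(n)$-bit linear-time technique for separable graphs and then recursing inside each mini-piece with word-sized local pointers. Combining these bounds gives expected $O(n)$ time for the succinct variant and deterministic $O(n)$ time for the compact variant, both using $O(n)$ bits during construction.
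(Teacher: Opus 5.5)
Your proposal follows the paper's argument: the nested division comes from the known $O(n)$-time, $O(n)$-bit planar division algorithm, applied again inside the mini pieces (your variant, which uses word-sized local pointers per mini piece and reuses the scratch space, also works there). The lookup table costs only $o(n)$ time and bits, and the compressed indexable dictionaries of Raman et al.\ are the only source of randomization; replacing them with plain $\ell+o(\ell)$-bit indexable dictionaries gives the deterministic compact variant.

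Two small corrections. First, that $O(n)$-bit algorithm (Kammer and Meintrup) is specific to planar graphs and only guarantees the relaxed boundary size $O(r^{1/2+\eps})$, which the encoding is designed to tolerate; it is not a known technique for general separable graphs, as the paper notes in Section~\ref{sec:general}. Second, the boundary counts you state are too weak. The encoding needs $O(n/\plgn)$ mini-boundary and $O(n/\plglgn)$ micro-boundary occurrences with large enough exponents, which the relaxed division with $r_0=\log^\alpha n$ does provide. With only an $O(n/\sqrt{\log n})$ bound, the $O(\log n)$ bits stored per mini-boundary vertex would not fit into $o(n)$ bits.
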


\body{As a first application, we obtain a simple sublinear-space test for
    bipartiteness (Lemma~\ref{lem:bipartite}). This test relies on checking whether
    a non-tree edge has endpoints of the same parity with respect to their BFS
    layers. The following applications are for plane graphs. They rely on
triangulated plane graphs, i.e. plane graphs in which every face is a triangle.
We therefore show how to triangulate biconnected plane graphs encoded by
Theorem~\ref{thm:main} (Lemma~\ref{lem:triangulate}). }

\body{For plane graphs, the encoding of the above theorem additionally supports
traversal of the interdigitating tree, the spanning tree of the dual graph
induced by the non-tree edges of the BFS.  The interdigitating tree implicitly
gives a tree decomposition of width $O(d)$ for triangulated plane graphs of
diameter $d$, with only minor additional work. We show that the corresponding
bags can be traversed in linear time (Lemma~\ref{lem:treedecomposition}). }

\body{We can use the interdigitating tree to compute
so-called balanced shortest-path cycle separators. A cycle separator is a
simple cycle whose removal separates the graph into connected components. It is
balanced if each resulting component contains at most a constant fraction of
the vertices of the graph. In our setting, the cycle is a shortest-path cycle:
it consists of two paths in the BFS tree together with one non-tree edge. After
a BFS tree has been computed, we can find such a separator in $o(n)$ time using
$o(n)$ bits (Lemma~\ref{lem:cyclesep}). Finding such a cycle separator is the first
step in realizing our main application, finding a balanced separator of size
$O(\sqrt{n})$, i.e., the classic planar separator theorem~\cite{LiptonT79}.}

\begin{theorem}\label{thm:separator}
Let $G=(V,E)$ be a biconnected plane graph represented by the plane-graph
encoding of Theorem~\ref{thm:main}. We can compute a $3/4$-balanced separator of size
$O(\sqrt n)$ in $O(n)$ time using $o(n)$ additional bits.
\end{theorem}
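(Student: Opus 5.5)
We follow the classical Lipton--Tarjan argument and implement each step with the tools already announced: the BFS of Theorem~\ref{thm:main}, triangulation (Lemma~\ref{lem:triangulate}), and the shortest-path cycle separator on the interdigitating tree (Lemma~\ref{lem:cyclesep}).

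\textbf{Step 1: choose two thin layers.} First compute a BFS tree $T$ from an arbitrary vertex $s$ in $O(n)$ time with $o(n)$ additional bits. Counting vertices per layer needs one counter per layer, and there can be $\Theta(n)$ layers, so we do not store all counters. Instead we traverse $T$ in BFS order and scan the layer sizes as a stream. From this stream we find the median layer $\ell_1$, i.e., the layer with at most $n/2$ vertices strictly below and strictly above it. Then we find the layers $\ell_0\le\ell_1$ and $\ell_2>\ell_1$ closest to $\ell_1$ with $|L_{\ell_0}|+2(\ell_1-\ell_0)\le 2\sqrt n$ and $|L_{\ell_2}|+2(\ell_2-\ell_1-1)\le 2\sqrt n$. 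The standard averaging argument guarantees that both exist. Each of these quantities is computable in a streaming pass using $O(\log n)$ bits, so two or three passes suffice.

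If the vertices strictly between $\ell_0$ and $\ell_2$ number at most $3n/4$, then $L_{\ell_0}\cup L_{\ell_2}$ is already the separator. We output it by one more BFS-order traversal and mark it with a bit vector restricted to $O(\sqrt n)$ vertices, stored as a list of $O(\sqrt n\log n)=o(n)$ bits.

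\textbf{Step 2: handle the heavy middle part.} Otherwise, the middle part $M$ (layers $\ell_0+1$ through $\ell_2-1$) contains more than $3n/4$ vertices and has depth $d=\ell_2-\ell_0-1=O(\sqrt n)$. Classically, one deletes the layers $\ge\ell_2$ and contracts the layers $\le\ell_0$ into a single root vertex. This yields a graph $G'$ with a spanning tree of radius $O(\sqrt n)$ obtained from $T$. We then triangulate $G'$ and take a fundamental cycle separator, which consists of two root paths of length $O(\sqrt n)$ plus one non-tree edge. The final separator is $L_{\ell_0}\cup L_{\ell_2}$ together with the cycle, minus the contracted root.

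We do not build $G'$ explicitly. We simulate it as a virtual view on the encoding. A vertex is alive if its layer lies in $(\ell_0,\ell_2)$, and the contracted root is adjacent to every alive vertex of layer $\ell_0+1$. The BFS tree restricted to $M$, with those vertices hung from the root, is exactly the BFS tree of $G'$. Its tree queries (parent, layer, LCA) are answered by the queries on $T$ from Theorem~\ref{thm:main}, with the layer offset by $\ell_0$.

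Because $G$ is biconnected, we may triangulate it once beforehand via Lemma~\ref{lem:triangulate}. Added edges only shrink distances, so we perform the whole construction on the triangulation and take BFS in the triangulated graph. A separator there is also a separator of $G$, since $G$ is a spanning subgraph, and its size is unchanged. Removing the outer layers, however, creates new faces. The cleaner path is to apply Lemma~\ref{lem:cyclesep} on the triangulated $G$ with vertex weights that are $1$ on $M$ and $0$ elsewhere. The interdigitating-tree argument for weighted balance is unaffected by zero weights.

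The resulting cycle consists of two $T$-paths and one edge. Only its portion inside $M$, of length $O(d)=O(\sqrt n)$ per path, contributes. Its parts above $\ell_0$ are cut off by $L_{\ell_0}$, and it cannot extend below $\ell_2$ into relevant regions without crossing $L_{\ell_2}$. Outputting the separator takes $O(\sqrt n)$ parent queries.

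\textbf{Main obstacle.} The hardest step is the correctness of this weighted or virtual cycle separator. We must argue that Lemma~\ref{lem:cyclesep} (or its proof) works with vertex weights restricted to $M$. We must also argue that each component of $G$ minus the separator has at most $3n/4$ vertices: components outside $M$ are bounded by the median choice ($\le n/2$), and components inside $M$ by the $2/3$ cycle balance applied to $|M|\le n$.

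We would first try to reprove Lemma~\ref{lem:cyclesep} with weights. It walks the interdigitating tree $\hat T$ computing subtree weights via the face-count formula. Making the weights depend on the layer only needs an $O(1)$-time membership test per vertex, so $O(n)$ time and $o(n)$ bits are preserved. Resetting the encoding (Theorem~\ref{thm:main}) between the phases keeps the space at $o(n)$ throughout.
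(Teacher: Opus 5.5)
Your argument is correct, but it is organized differently from the paper's proof.

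\textbf{The paper's route.} It first computes an ordinary $3/4$-balanced fundamental cycle $C$, using Lemma~\ref{lem:cyclesep} unchanged with the uniform vertex weights. Only if $C$ is long does it shortcut $C$ by two thin layers $i^-<i^+$. These are chosen inside the layer span $[i_{\min},i_{\max}]$ of $C$, subject to $|L_{i^\pm}|\le\sqrt n$, $W_{\le i^-}\le 3n/4$ and $W_{>i^+}\le 3n/4$. It then outputs $S^-\cup S^+\cup\{u\in C\mid i^-\le \ell(u)\le i^+\}$.

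\textbf{Your route.} You keep the classical Lipton--Tarjan order: first the median layer and the thin layers $\ell_0,\ell_2$, then a cycle balanced only with respect to weight supported on $M$. You output $L_{\ell_0}\cup L_{\ell_2}\cup(C\cap M)$.

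\textbf{What each buys.} Your correctness argument is short and self-contained. $G$-edges join equal or adjacent layers, so a $G$-path that stays in $M$ and goes from inside to outside $C$ must meet $C\cap M$. Each of the two monotone tree paths of $C$ contributes at most $d\le 2\sqrt n$ vertices of $M$. The paper instead defers the balance and size analysis of its shortcut to the literature. The price you pay is that Lemma~\ref{lem:cyclesep} cannot be used verbatim, because the face weights now depend on BFS layers. The paper's route needs no change to that lemma and keeps the cycle search in $o(n)$ time.

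\textbf{Details to fix.}
\begin{enumerate}
\item The weighted variant cannot be obtained the way you describe. Storing subtree weights along a walk of $\hat T$ would cost $\Theta(n\log n)$ bits, and Lemma~\ref{lem:cyclesep} does not work that way. It builds $\skel$, whose component weights come from lookup-table summaries that do not know absolute layers. Instead, compute the $M$-restricted weight of every skeleton vertex explicitly, one mini piece at a time. Use $O(1)$-time layer queries and a reusable workspace of $O(\plgn\log\log n)$ bits. This takes $O(n)$ time overall, which the theorem allows. The centroid argument is unchanged, since $W=|M|>3n/4$ keeps every face weight below $W/4$.
\item Do not take the BFS in $G^\triangle$. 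Lemma~\ref{lem:bfs} runs on the nested division of $G$, and the explicitly stored auxiliary edges lie outside the pieces. Compute $T$ in $G$, as the paper does, and treat triangulation edges as non-tree edges. Your argument still goes through: $T$ spans $G^\triangle$, and only $G$-edges need to respect the layer structure.
\item Walking the two tree paths of $C$ may take $O(n)$ rather than $O(\sqrt n)$ parent queries, since an endpoint of $e$ may lie in a layer far beyond $\ell_2$. This still stays within $O(n)$.
\end{enumerate}
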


\body{Finally, our results generalize to so-called separable
graph classes, albeit without the construction-time and space guarantees stated
in Corollary~\ref{cor:construction}; see Section~\ref{sec:general}.}

\begin{remark}
    We note that the construction of Theorem~\ref{thm:main} already
relies on finding small balanced separators, but to achieve $O(n)$ bits of
space during the construction phase, a space-efficient algorithm is used that
finds separators of size $O(n^{1/2+\eps})$ for some $\eps \in
(0,1)$~\cite{KammerM22}. 
\end{remark}

\begin{remark}
    As is common for succinct graph encodings, our encoding is for unlabeled
    graphs: input labels are not preserved. Preserving arbitrary labels would
    already require $\Omega(n\log n)$ bits, whereas unlabeled planar graphs,
    and many similar graph classes, admit $\Theta(n)$-bit encodings. Thus, the
    encoding stores a graph isomorphic to the input graph. This is fine for
    decision problems that are isomorphism invariant. If vertices of the
    original input graph have to be reported, the relabeling must be stored
    externally or tracked during construction, where the latter can also be
    done just for a solution set. For example, our application of finding a
    balanced separator of size $O(\sqrt{n})$ (\ref{thm:separator}) can be used
    in such a way: construct the graph encoding and compute a separator $S$,
    then compute the data structure again and track what vertices of the
    original graph receive the labels of $S$. 
\end{remark}

\body{ 
    Section~\ref{sec:prelim} gives the necessary preliminaries. In particular, it
    recalls the succinct encoding of Blelloch and Farzan~\cite{BlellochF10},
    extensions thereof by Elberfeld et al.~\cite{ElberfeldKM25}, and the basic
    notation for plane graphs. It also contains some structural lemmas and the
    first algorithmic result, namely the triangulation procedure for
    biconnected plane graphs. Section~\ref{sec:bfs} describes how to execute a BFS
    directly on the succinct encoding in $O(n)$ time using $o(n)$ additional
    bits, and how to keep the resulting BFS tree available for standard tree
    queries. Section~\ref{sec:app} gives the application of finding balanced
    separators, and additional applications; bipartiteness testing and
    the implicit tree decomposition.
    Notes on how to generalize our results to other graph classes are given in
    Section~\ref{sec:general}, and information on the construction time and
space of our encoding (i.e., Corollary~\ref{cor:construction}) can be found in
Section~\ref{sec:construction}.}

\section{Preliminaries} \label{sec:prelim}
\body{ We work in the standard word-RAM model with word size $\Omega(\log n)$.
    The input is stored in read-only memory, and we measure space only in the
    read/write working memory. We use $[x]\coloneqq\{1,\ldots,x\}$ for
    $x\in\Nat$. For ease of presentation, we assume that all input graphs are
    connected. All our results extend to disconnected graphs by considering
    each connected component individually. Let $G=(V,E)$ be a graph. For a
    vertex set $V'\subseteq V$, we write $G[V']$ for the subgraph of $G$
    induced by $V'$. For a vertex $u\in V$, we write $G-u\coloneqq
G[V\setminus\{u\}]$. }

\body{A \emph{separator} is a vertex set $S\subseteq V$ such that
    $G[V\setminus S]$ is disconnected. It is \emph{balanced} if $V\setminus S$
    can be partitioned into two sets $A$ and $B$ such that there is no edge
    between $A$ and $B$, and $|A|\leq \delta n$ and $|B|\leq \delta n$ for some
fixed constant $\delta\in(0,1)$. }

\body{ An encoding for a graph class $\cG$ is called \emph{succinct} if it uses
    $Z_{\cG}(n)+o(Z_{\cG}(n))$ bits, where $Z_{\cG}(n)$ is the
    information-theoretic minimum number of bits needed to distinguish between
    $n$-vertex graphs of $\cG$. It is called \emph{compact} if it uses
$O(Z_{\cG}(n))$ bits. For planar graphs, $Z_{\cG}(n)=\Theta(n)$
\cite{BlellochF10}. }

\subsection{Planar Graphs}
\body{ A graph is \emph{planar} if it admits an embedding in the plane without
    edge crossings. Planar graphs are closed under taking minors and have only
    linearly many edges, that is, $m=O(n)$. We use these facts throughout
    without further mention. }

\body{ A \emph{plane graph} is a planar graph together with a fixed
    combinatorial embedding. We use the standard dart representation of such an
    embedding. A \emph{dart} is a directed edge. The dart $(v,u)$ is the
    reverse dart of $(u,v)$. The cyclic order of the darts around the vertices
    determines the faces of the embedding. In particular, from a dart one can
    move to the next or previous dart on the boundary of the same face, and one
    can switch to the reverse dart, which corresponds to crossing the edge to
    the adjacent face. A plane graph is \emph{triangulated} if every face is a
triangle. }

\body{ The \emph{dual graph} $\hat G$ of a plane graph $G$ has one vertex for
    every face of $G$, including the outer face. For every edge $e\in E(G)$
    separating two faces $f$ and $f'$, the dual graph has a dual edge $\hat
    e=\{f,f'\}$. We say that $\hat e$ \emph{crosses} $e$. We often refer to $G$
    as the \emph{primal graph} to avoid confusion with the dual graph. Let $T$
    be a spanning tree of a connected plane graph $G$. Then the dual edges
    corresponding to the primal edges in $E(G)\setminus E(T)$ form a spanning
tree $\hat T$ of the dual graph. We call $T$ and $\hat T$ the corresponding
\emph{interdigitating trees}. }

\definition{An $r$-division $\cP$ of a graph $G$ is a collection of subgraphs
    $P \in \cP$, called \emph{pieces}. A vertex occurrence in a piece $P$ is
    called a \emph{boundary vertex} of $P$ if the corresponding vertex of $G$
    is incident to edges assigned to different pieces. All other vertex
    occurrences are called \emph{non-boundary vertices}. An $r$-division has
    the following properties. First, every edge of $G$ is contained in exactly
    one piece $P \in \cP$. Second, there are $\Theta(n/r)$ pieces. Third, each
    piece contains at most $r$ vertices. Finally, there are $O(\sqrt{r})$
    boundary vertices per piece.
}
\body{ For planar graphs, $r$-divisions always exist and can be computed in
    linear time~\cite{Goodrich95}. Roughly speaking, an $r$-division is
    constructed by a clever recursive use of balanced separators. For our
    purposes, it is enough to use the relaxed definition from
    \cite{ElberfeldKM25}, where the number of boundary vertices per piece can
    be $O(r^{1/2+\eps})$, for any fixed $\eps\in(0,1/2)$. Elberfeld et
    al.~\cite{ElberfeldKM25} used this relaxation because (1) it makes no
    difference for later bounds, and (2) there is an $O(n)$-bit and $O(n)$-time
    algorithm that constructs an $r$-division with this relaxed boundary size
    when the input graph is a planar graph~\cite{KammerM22}. For more
    information on $r$-divisions,
see~\cite{Frederickson87,Goodrich95,planarity,KleinMS13}.}

\subsection{Nested Divisions}
\body{For the remainder of this section, let $G=(V,E)$ be a planar graph.}

\definition{We define a \emph{nested division} as follows. Let $r_0$ and $r_1$
    be two parameters with $r_1<r_0$. First, the graph $G$ is decomposed into
    an $r_0$-division $\cP\coloneqq\{P_1,\ldots,P_N\}$ with $N=\Theta(n/r_0)$,
    whose pieces we call \emph{mini pieces}. For every mini piece $P_i$, we
    then compute an $r_1$-division $\cP_i\coloneqq\{P_{i,1},\ldots,P_{i,N_i}\}$
    of $P_i$, with $N_i=\Theta(r_0/r_1)$; the pieces of these divisions are
    called \emph{micro pieces}. Boundary vertices of $\cP$ are called
    mini-boundary vertices, and boundary vertices of the divisions $\cP_i$ are
    called micro-boundary vertices. Throughout the paper, we fix piece sizes
    $r_0=\log^\alpha n$ and $r_1=(\log\log n)^\alpha$, where $\alpha$ is a
    sufficiently large constant. }

\body{Recall that each mini piece has $O(r_0^{1/2+\eps})=o(r_0)$ mini-boundary
    vertices, and each micro piece has $O(r_1^{1/2+\eps})=o(r_1)$
    micro-boundary vertices, for an arbitrarily small fixed $\eps \in (0,1/2)$.
    We choose $\alpha$ large enough so that $\alpha\geq 4$ and
    $\alpha(1/2-\eps)>2$. Then $r_0\geq c\log^4 n$ and $r_0^{1/2-\eps}\geq
    c\log^2 n$, and analogously $r_1\geq c\log^4\log n$ and $r_1^{1/2-\eps}\geq
    c\log^2\log n$, for any fixed constant $c$ and sufficiently large $n$.
    Equivalently, the size of a mini piece is larger than its boundary by at
    least a factor of $\log^2 n$, and the size of a micro piece is larger than
    its boundary by at least a factor of $\log^2\log n$. With this choice of
    $\alpha$, the number of mini pieces is $O(n/\log^4 n)$, and the total
    number of mini-boundary vertices is $O(n/\log^2 n)$. Analogously, over all
    mini pieces, the number of micro pieces is $O(n/\log^4 \log n)$ and the
total number of micro-boundary vertices is $O(n/\log^2\log n)$.}

\body{To avoid carrying the constants through every expression, we write
$\plgn$ and $\plglgn$ for sufficiently large polylogarithmic functions chosen
so that all bounds above hold. For example, we say there are $O(n/\plgn)$
mini-boundary vertices in total.}

\body{For the special case where $G$ is a triangulated plane graph, we assign
    each face $f$ to a unique mini piece as follows. Let the three edges of $f$
    be assigned to mini pieces $P_i,P_j,P_k$, where the indices are not
    necessarily distinct. If at least two indices are equal, then $f$ is
    assigned to that mini piece. Otherwise, $f$ is assigned to $P_x$ with $x =
    \max \{i, j, k\}$. The assignment to micro pieces inside a fixed mini piece
    is defined analogously. This assignment is arbitrary and is only used for
    analysis purposes. A dual edge is called a \emph{mini-interface edge} if
    its two incident faces are assigned to different mini pieces. Similarly,
    inside a mini piece $P_i$, a dual edge is called a \emph{micro-interface
edge} if its two incident faces are assigned to different micro pieces of
$P_i$.}

\begin{lemma} \label{lem:interface} Let $\cD$ be a nested division of a
    triangulated plane graph. Every mini-interface edge of the dual graph
    crosses a primal edge whose two endpoints are mini-boundary vertices.
    Moreover, for every mini piece $P$, every micro-interface edge inside $P$
crosses a primal edge whose two endpoints are micro-boundary vertices of $P$.
\end{lemma}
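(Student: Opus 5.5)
Both claims follow from a local case analysis of the two triangles on either side of the crossed edge. Let $\hat e=\{f,f'\}$ be a mini-interface edge, and let $e=\{u,v\}$ be the primal edge it crosses. Say $e$ is assigned to the mini piece $P$; by the first property of an $r$-division this piece is unique. I will show that $e$ itself is the required edge, that is, that both $u$ and $v$ are mini-boundary vertices.

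Since the graph is triangulated, $f$ is a triangle with vertices $u$, $v$ and some third vertex $w$. Its three edges are $e$, $a=\{v,w\}$ and $b=\{w,u\}$, and likewise for $f'$. Here I assume the triangulation is simple, so the three boundary edges of a face are distinct. The key observation is the following: if both edges $a,b$ of $f$ other than $e$ lay in $P$, or if exactly one of them did, then at least two of the three indices would equal that of $P$, and the assignment rule would put $f$ in $P$. The same holds for $f'$. Because $f$ and $f'$ are assigned to different mini pieces, at least one of them, say $f$, is assigned to some $Q\neq P$.

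I then split into the two cases of the assignment rule.
\begin{enumerate*}[label=(\roman*)]
\item Two of the three edges of $f$ share an index and that index is $Q\neq P$. These two edges cannot include $e$, so $a$ and $b$ both lie in $Q$.
\item All three indices are distinct and $Q$ is their maximum. Then $Q$ is the index of $a$ or $b$, and both $a$ and $b$ lie in pieces different from $P$.
\end{enumerate*}
In either case, $u$ is incident to $e\in P$ and to $b\notin P$, and $v$ is incident to $e\in P$ and to $a\notin P$. By definition, both are therefore mini-boundary vertices.

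The micro statement is proved by exactly the same argument. Inside a fixed mini piece $P$, one uses the $r_1$-division $\cP_i$ of $P$, the micro-level face assignment, and the fact that every edge of $P$ lies in exactly one micro piece. The only point needing care, and the one I expect to be the main (though mild) obstacle, is the degenerate situation where a face of the triangulation is bounded by the same edge twice, or where a face of $P$ is not a face of the original triangulation. I would handle this by noting that the assignment is only defined for faces whose three edges lie in $P$, and that the triangulation is simple, so the three edges of each face are distinct and the case analysis above applies verbatim.
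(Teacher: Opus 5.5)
Your proof is correct and is essentially the paper's argument on the two triangles sharing the crossed edge. The paper assumes one endpoint $a$ is not a mini-boundary vertex, so $\{a,b\},\{a,c\},\{a,c'\}$ lie in one piece and both faces would be assigned to it; you instead argue directly from the face not assigned to the piece $P$ of $e$, and get both endpoints at once (your case split (i)/(ii) is not needed, since the contrapositive of your key observation already shows neither other edge of that face lies in $P$).
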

\begin{proof}
We prove the statement for mini-interface edges; the proof for micro-interface
edges is analogous. Consider a mini-interface edge $\hat e$ of the dual graph.
Let $f$ and $f'$ be the two faces incident to $\hat e$, and let $\{a,b\}$ be
the primal edge crossed by $\hat e$. Since the graph is triangulated, the two
faces are triangles. Thus, the edges forming $f$ and $f'$ are
\[
E(f)=\{\{a,b\},\{b,c\},\{c,a\}\}
\quad\text{and}\quad
E(f')=\{\{a,b\},\{b,c'\},\{c',a\}\}
\]
for some vertices $c$ and $c'$. We claim that both $a$ and $b$ are
mini-boundary vertices. Suppose, for contradiction, that $a$ is not a
mini-boundary vertex. Then all edges incident to $a$ that occur in these two
faces, namely $\{a,b\}$, $\{a,c\}$, and $\{a,c'\}$, are assigned to the same mini
piece $P_{i}$. Consequently, both $f$ and $f'$ have two edges assigned to
$P_{i}$. By the face assignment defined above, both faces are assigned to
$P_{i}$, contradicting that $\hat e$ is a mini-interface edge. Hence, $a$ is a
mini-boundary vertex. The same argument shows that $b$ is a mini-boundary
vertex. Therefore, every mini-interface edge crosses a primal edge whose two
endpoints are mini-boundary vertices. 
The same argument inside a fixed mini piece $P$ shows the corresponding
statement for micro-interface edges.
\end{proof}

\subsection{Succinct Nested Division}

\body{The succinct encoding of Blelloch and Farzan~\cite{BlellochF10} can be viewed
as an encoding of a nested division. Their construction first decomposes the
input graph into what they call mini graphs, and then decomposes each mini
graph into micro graphs. Up to terminology, these are the mini and micro pieces
above. Micro pieces are small enough to be represented by indices into a lookup
table, while the remaining structures store how the different pieces fit
together. In the work of Elberfeld et al.~\cite{ElberfeldKM25}, the same
underlying representation is viewed as a data structure for nested divisions,
referred to as \emph{succinct nested division}.}

\definition{A succinct nested division of $G$ assigns ids to all pieces on both levels.
Each mini piece has a unique id $i$, and we write $P_i$ for the mini piece with
id $i$. For every mini piece $P_i$, each micro piece in the division of $P_i$
has a unique id $j$, and we write $P_{i,j}$ for the micro piece with id $j$
inside $P_i$. The encoding also assigns labels to vertex occurrences on all
three levels. Since $G$ is unlabeled, the encoding chooses \emph{global labels}
from $[n]$ for the vertices of $G$. Inside a mini piece $P_i$, vertex
occurrences are labeled locally by labels from $[r_0]$, called \emph{mini
labels}. Thus, a vertex occurrence in $P_i$ is identified by a pair $(i,u_i)$,
where $u_i$ is its mini label in $P_i$. Inside a micro piece $P_{i,j}$, vertex
occurrences are labeled locally by labels from $[r_1]$, called \emph{micro
labels}. Thus, a vertex occurrence in $P_{i,j}$ is identified by a triple
$(i,j,u_{i,j})$, where $u_{i,j}$ is its micro label in $P_{i,j}$.}

\body{Mini-boundary vertices may have several mini labels, one in each mini
    piece in which they occur. Analogously, micro-boundary vertices may have
    several micro labels, distinguished by the ids of the micro pieces in which
    they occur. Each mini piece $P_i$ contains at most $O(r_0^{1/2+\eps})$ mini
    labels of boundary vertices. Since there are $\Theta(n/r_0)$ mini pieces,
    the total number of mini-boundary labels is $O(n/r_0^{1/2-\eps})$; the
    analogous bound holds for micro-boundary labels inside the mini pieces.
     }
\begin{observation}\label{obs:multiplicities}
    Counted over all mini pieces, mini-boundary vertices and mini-boundary
    labels both occur $O(n/\plgn)$ times. Counted over all micro pieces,
    micro-boundary vertices and micro-boundary labels both occur
    $O(n/\plglgn)$ times.
\end{observation}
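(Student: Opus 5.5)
The plan is a direct counting argument on the two levels of the nested division, using the relaxed boundary bounds and the choice of $\alpha$ fixed earlier. The one subtle point is what ``occurrences of a boundary vertex'' means, so I first make it explicit. An occurrence of a mini-boundary vertex is a pair consisting of the vertex and a mini piece containing it. Each such occurrence carries exactly one mini label in that piece. Hence the number of mini-boundary labels equals the number of occurrences of mini-boundary vertices counted over all mini pieces, and it suffices to bound the labels. The analogous identification holds for micro pieces and micro labels. In particular, I never need a bound on how many pieces a single boundary vertex can lie in; the per-piece boundary bound absorbs this automatically.

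For the mini level, I sum over the $\Theta(n/r_0)$ mini pieces. Each mini piece has $O(r_0^{1/2+\eps})$ boundary vertex occurrences, so the total is $O(n/r_0^{1/2-\eps})$. With $r_0=\log^\alpha n$ and $\alpha(1/2-\eps)>2$, this is $O(n/\log^2 n)=O(n/\plgn)$. The number of distinct mini-boundary vertices is at most the number of their occurrences, so it also obeys this bound.

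For the micro level, I fix a mini piece $P_i$ with $r_i\le r_0$ vertices. Its $r_1$-division has $O(r_i/r_1)$ micro pieces, each with $O(r_1^{1/2+\eps})$ boundary occurrences. So $P_i$ contributes $O(r_i/r_1^{1/2-\eps})$ micro-boundary labels. Summing over all mini pieces needs $\sum_i r_i=O(n)$, and I expect this to be the only step requiring care, because mini-boundary vertices appear in several mini pieces. I resolve it by bounding the vertex occurrences instead: they number $n$ plus the mini-boundary occurrences, which is $n+O(n/\plgn)=O(n)$ by the previous paragraph. Alternatively, one can use $\Theta(n/r_0)$ pieces of size at most $r_0$. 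Either way the total is $O(n/r_1^{1/2-\eps})=O(n/\log^2\log n)=O(n/\plglgn)$. Micro-boundary vertices, counted over micro pieces, are again bounded by their labels, which finishes the proof.
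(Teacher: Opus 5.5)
Your proposal is correct and follows essentially the same argument as the paper: multiply the number of pieces by the per-piece boundary bound $O(r^{1/2+\eps})$, use that each boundary occurrence carries exactly one label, and invoke $\alpha(1/2-\eps)>2$ to get $O(n/\log^2 n)$ and $O(n/\log^2\log n)$. The only difference is that you spell out the micro-level sum over mini pieces, either via $\sum_i r_i = O(n)$ or via $N_i=\Theta(r_0/r_1)$; the paper covers this step with ``the analogous bound holds'', implicitly using the latter.
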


\body{A succinct nested division provides access to \emph{translation mappings}
    between the levels. If a vertex is identified by its global label $u$, we
    can obtain all pairs $(i,u_i)$ that represent occurrences of $u$ in mini
    pieces $P_i$. Conversely, if we are given the id $i$ of a mini piece and a
    mini label $u_i$ inside $P_i$, then we can recover the global label $u$ of
    $u_i$. The same translation is available between mini- and micro labels
    inside each mini piece: from a mini label $u_i$ in $P_i$ we obtain all
pairs $(j,u_{i,j})$ representing its occurrences in micro pieces, and from
$(i,j,u_{i,j})$ we can obtain the corresponding mini label.} 

\body{With these translation mappings, Blelloch and Farzan~\cite{BlellochF10}
    realize \emph{basic graph queries}: adjacency, degree, and neighborhood
    queries. These queries are provided for the global graph and locally inside
    mini and micro pieces. If the graph $G$ is a plane graph, that is, it comes
    with an explicit combinatorial embedding, Blelloch and Farzan also showed
    that the combinatorial embedding can be encoded such that standard access
    via a dart representation is provided.}

\definition{\label{def:ts}
    We use the \emph{table-swap} operation of Elberfeld et
    al.~\cite[Section~2.3]{ElberfeldKM25}. Recall that every micro piece is
    represented by an index into a lookup table. This lookup table does not
    have to contain only the underlying micro pieces. Instead, it can be
    extended so that its entries also store a small amount of additional
    information. For each micro piece we can store a constant number of
    orderings and a constant number of colorings of at most $r_1^\gamma$
    distinguished vertices, for any fixed $\gamma \in (0,1)$, where the colors
    come from a universe $[\lceil r_1^c\rceil]$ for any constant $c$. In
    addition, the table entry itself may encode a constant number of colors
    from such a universe. The table then stores all possible such orderings and
    colorings for every graph in the table. This idea is already implicit in
    the encoding of Blelloch and Farzan~\cite{BlellochF10}, e.g., for encoding
combinatorial embeddings.}

\body{For us, the distinguished vertices are equal to the micro boundary
    vertices. Entries of the lookup table may be seen as micro pieces together
    with a local state. For every constant number of local queries or updates
    whose result can be computed in polynomial time for one table entry, the
    answers can be precomputed as part of the lookup table. Suppose that the
    micro piece $P_{i,j}$ is currently stored by an index $x$. A table query
    may interpret $x$ as encoding not only the underlying micro piece, but also
the state of an algorithm. The answer to the query is another index $y$ that
encodes the same micro piece together with the updated state. The table-swap
operation (\ref{def:ts}) is later used to advance a BFS state inside micro pieces.}

\begin{lemma}[\cite{BlellochF10,ElberfeldKM25,KammerM22}]
\label{lem:nested}
Let $G$ be a planar graph. There exists a succinct nested division $\cD$
that provides basic graph queries, translation mappings, and table-swap
operations.
\end{lemma}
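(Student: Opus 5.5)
This lemma is essentially a compilation of known results, so the plan is to assemble it from the cited works rather than build anything new. We start from the construction of Blelloch and Farzan~\cite{BlellochF10}. It decomposes a planar graph into mini graphs and then into micro graphs. Micro graphs are stored as indices into a lookup table of all graphs on at most $r_1$ vertices. The gluing information is stored in compressed indexable dictionaries, which yields a succinct representation with $O(1)$-time adjacency, degree and neighborhood queries. Up to terminology, their mini and micro graphs are exactly the mini and micro pieces of a nested division with $r_0=\log^\alpha n$ and $r_1=(\log\log n)^\alpha$. The first step is therefore to check that their parameters can be set to these values. This amounts to verifying that the lookup table on graphs with $r_1=\plglgn$ vertices, augmented as below, still has size $o(n)$. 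This holds since the number of table entries is $2^{O(r_1)}=2^{\plglgn}=n^{o(1)}$, and each entry is polynomial in $r_1$.

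Second, we take the translation mappings (global label $\leftrightarrow$ (mini id, mini label) $\leftrightarrow$ (micro id, micro label)) directly from the structures Blelloch and Farzan store for their queries. Elberfeld et al.~\cite{ElberfeldKM25} already expose these explicitly as a succinct nested division. The number of boundary labels on each level is bounded by Observation~\ref{obs:multiplicities}, so the dictionaries storing these mappings occupy $o(n)$ bits and answer queries in $O(1)$ time.

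Third, we add the table-swap operation (\ref{def:ts}) following \cite[Section~2.3]{ElberfeldKM25}. We extend each table entry by a constant number of orderings and colorings of at most $r_1^\gamma$ distinguished vertices, with colors from $[\lceil r_1^c\rceil]$. The number of variants per micro graph is then at most $(r_1^{\gamma})!^{O(1)}\cdot r_1^{O(r_1^\gamma)}=2^{O(r_1^\gamma\log r_1)}=2^{o(r_1)}$. Hence the index length grows by only a lower-order term per micro piece, and the total size stays $Z_{\cG}(n)+o(n)$. We also precompute the answers to all constant-size local queries and updates for every entry. This is possible because the table has $n^{o(1)}$ entries and each answer takes $\mathrm{poly}(r_1)$ time to compute.

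Finally, for the construction we replace the separator-based division of \cite{BlellochF10} by the $O(n)$-bit, $O(n)$-time relaxed $r$-division algorithm of Kammer and Meintrup~\cite{KammerM22}, which has $O(r^{1/2+\eps})$ boundary vertices per piece. The main obstacle I expect is checking that this relaxation does not break succinctness. The boundary bookkeeping then costs $O(n\log r_0/r_0^{1/2-\eps})$ bits on the mini level, and analogously on the micro level. The choice $\alpha(1/2-\eps)>2$ makes both terms $o(n)$, exactly as argued in \cite{ElberfeldKM25}, so the lemma follows.
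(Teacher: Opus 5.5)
Your proposal is correct and takes the same route as the paper. The paper states this lemma as a citation, and Section~\ref{sec:construction} sketches exactly your ingredients: the Blelloch--Farzan mini/micro structure with compressed dictionaries, the table-swap extension of Elberfeld et al., and the relaxed $r$-divisions of Kammer and Meintrup together with the parameter check on $\alpha$.

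One small slip: mini-boundary labels must be translated to global labels in $[n]$, so the mini-level bookkeeping costs $O(n\log n/r_0^{1/2-\eps})$ bits, not $O(n\log r_0/r_0^{1/2-\eps})$. Analogously, the micro level costs $O(n\log r_0/r_1^{1/2-\eps})$ bits. Both are still $o(n)$ under the choice $\alpha(1/2-\eps)>2$, so your conclusion stands.
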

\definition{\label{def:compact} A \emph{compact nested division} provides the
    same interface as a succinct nested division, but only guarantees a compact
    space bound. We do not distinguish the two variants unless the distinction
    is relevant. In short, a compact nested division can be constructed in
deterministic linear time, while a succinct nested division can be constructed
in expected linear time.}

\subsection{Triangulating a Nested Division}\label{sec:triangulate}

\definition{\label{par:triangulated}
    It is often convenient to triangulate a plane graph. We call a succinct,
    respectively compact, nested division of a plane graph \emph{triangulated}
    if it remains succinct, respectively compact, and supports graph and
    embedding queries both for the original graph $G$ and for a triangulated
    supergraph $G^\triangle$. Added edges can be viewed as having infinite
    weight.
}

\definition{\label{def:rs}
Let $S\subseteq U$ be a set of integers from a universe $U=[\ell]$.
For $k\in U$, we define
\[
    \operatorname{rank}_S(k) \coloneqq |S\cap [k]|.
\]
For $q\in [|S|]$, we define
\[
    \operatorname{select}_S(q)
    \coloneqq
    \min\{s\in S \mid \operatorname{rank}_S(s)=q\}.
\]
}

\body{
We now show how to triangulate a succinct, respectively
compact, nested division of a biconnected plane graph. If the plane graph is not
biconnected, then faces are not necessarily simple cycles. In this case,
triangulating a face after the nested division has been fixed can force new
edges between non-boundary vertices contained in different pieces, turning them
into boundary vertices. We do not support changes to the boundary structure.
Thus, we restrict the triangulation step to biconnected plane graphs. For
biconnected plane graphs, we use a variant of the standard fan triangulation
after a preprocessing step that ensures that each face either contains only
boundary vertices or is contained in a mini, respectively micro, piece. See
Figure~\ref{fig:triangulate} for a visualization. The distinction between
expected and deterministic linear construction time is the same as for the
nested division itself.
}

\begin{figure}[ht!]
  \centering
  \pgfdeclarelayer{foreground}
\pgfsetlayers{main,foreground}
\begin{tikzpicture}[
    yscale=0.7,
    boundary/.style={draw, rectangle, fill=black, minimum size=3.5pt, inner sep=0pt},
    nonboundary/.style={draw, circle, fill=white, minimum size=3.5pt, inner sep=0pt},
    p1/.style={draw=green!50!black, line width=4pt, opacity=.22, line cap=round},
    p2/.style={draw=orange!80!black, line width=4pt, opacity=.22, line cap=round},
    p3/.style={draw=blue!70!black, line width=4pt, opacity=.22, line cap=round},
    fan/.style={densely dotted, line width=1pt},
    badfan/.style={densely dotted, draw=red, line width=1pt},
    goodbdry/.style={densely dashed, line width=1.15pt, blue!70!black},
    localtri/.style={densely dotted, line width=1pt, gray!80}
]

\begin{scope}[shift={(0,0)}]
    \coordinate (v1)  at (-2.6,  1.0);
    \coordinate (v2)  at (-1.6,  1.75);
    \coordinate (v3)  at (-0.3,  1.55);
    \coordinate (v4)  at ( 1.0,  1.0);
    \coordinate (v5)  at ( 2.2,  0.35);
    \coordinate (v6)  at ( 2.35, -0.75);
    \coordinate (v7)  at ( 1.2, -1.7);
    \coordinate (v8)  at (-0.3, -1.95);
    \coordinate (v9)  at (-1.65,-1.45);
    \coordinate (v10) at (-2.6, -0.35);

    \draw[p1] (v1)--(v2)--(v3)--(v4);
    \draw[p2] (v4)--(v5)--(v6)--(v7);
    \draw[p3] (v7)--(v8)--(v9)--(v10)--(v1);

    \draw[thick] (v1)--(v2)--(v3)--(v4)--(v5)--(v6)--(v7)--(v8)--(v9)--(v10)--cycle;

    \draw[fan]    (v1)--(v3);
    \draw[fan]    (v1)--(v4);
    \draw[badfan] (v1)--(v5);
    \draw[badfan] (v1)--(v6);
    \draw[fan] (v1)--(v7);
    \draw[fan] (v1)--(v8);
    \draw[fan] (v1)--(v9);

    \begin{pgfonlayer}{foreground}
        \foreach \x in {1,4,7} {
            \node[boundary] at (v\x) {};
        }
        \foreach \x in {2,3,5,6,8,9,10} {
            \node[nonboundary] at (v\x) {};
        }
    \end{pgfonlayer}
\end{scope}

\begin{scope}[shift={(6.0,0)}]
    \coordinate (w1)  at (-2.6,  1.0);
    \coordinate (w2)  at (-1.6,  1.75);
    \coordinate (w3)  at (-0.3,  1.55);
    \coordinate (w4)  at ( 1.0,  1.0);
    \coordinate (w5)  at ( 2.2,  0.35);
    \coordinate (w6)  at ( 2.35, -0.75);
    \coordinate (w7)  at ( 1.2, -1.7);
    \coordinate (w8)  at (-0.3, -1.95);
    \coordinate (w9)  at (-1.65,-1.45);
    \coordinate (w10) at (-2.6, -0.35);

    \draw[p1] (w1)--(w2)--(w3)--(w4);
    \draw[p2] (w4)--(w5)--(w6)--(w7);
    \draw[p3] (w7)--(w8)--(w9)--(w10)--(w1);

    \draw[thick] (w1)--(w2)--(w3)--(w4)--(w5)--(w6)--(w7)--(w8)--(w9)--(w10)--cycle;

    \draw[goodbdry] (w1)--(w4);
    \draw[goodbdry] (w4)--(w7);
    \draw[goodbdry] (w1)--(w7);

    \draw[fan] (w1)--(w3);
    \draw[fan] (w4)--(w6);
    \draw[fan] (w7)--(w9);
    \draw[fan] (w7)--(w10);

    \begin{pgfonlayer}{foreground}
        \foreach \x in {1,4,7} {
            \node[boundary] at (w\x) {};
        }
        \foreach \x in {2,3,5,6,8,9,10} {
            \node[nonboundary] at (w\x) {};
        }
    \end{pgfonlayer}
\end{scope}

\end{tikzpicture}
  \caption{
    The left triangulation of the interior face creates edges that form new
    boundary vertices. The right triangulation first encloses segments of the face
    with boundary-to-boundary edges such that afterward all added edges are between
    non-boundary vertices of the same piece.
  }
  \label{fig:triangulate}
\end{figure}

\body{
Throughout this section, we assume that a succinct or compact nested division
$\cD$ of a biconnected plane graph $G=(V,E)$ is available.
}

\begin{claim}\label{cl:tone}
After adding auxiliary edges between mini-boundary vertices and, inside each
mini piece, between micro-boundary vertices, every face either consists only of
boundary vertices of the corresponding level or is contained in a single micro
piece.
\end{claim}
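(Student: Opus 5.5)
We first fix the auxiliary edges concretely, following Figure~\ref{fig:triangulate}. Since $G$ is biconnected, every face $f$ is bounded by a simple cycle $C_f$. Traverse $C_f$ and mark the mini-boundary vertices on it, in cyclic order $b_1,\ldots,b_k$. Each maximal subpath of $C_f$ strictly between consecutive marked vertices $b_t$ and $b_{t+1}$ contains only non-boundary vertices. If $k\ge 2$, we add the chord $\{b_t,b_{t+1}\}$ inside $f$ for every $t$ (indices mod $k$), skipping pairs already joined by an edge of $C_f$. We also add, if needed, chords among the $b_t$ that triangulate the polygon $b_1\cdots b_k$. These auxiliary edges cut $f$ into two kinds of faces. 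The first kind is bounded only by the chords and thus only by mini-boundary vertices. The second kind is a \emph{pocket}, bounded by a subpath $b_t,x_1,\ldots,x_q,b_{t+1}$ of $C_f$ together with the chord $\{b_t,b_{t+1}\}$.

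The first step is to show that all edges of a pocket lie in one mini piece. Each interior vertex $x_s$ is non-boundary, so all edges incident to $x_s$ belong to a single mini piece. Consecutive edges on the subpath share an interior vertex, so by induction along the path all of them lie in the same mini piece $P_i$. We assign the new chord $\{b_t,b_{t+1}\}$ to $P_i$ as well, which is consistent because $b_t$ and $b_{t+1}$ already occur in $P_i$. This assignment creates no new boundary vertices: it only adds an edge between two vertices that are already mini-boundary. Chords inside the boundary-only polygon are assigned to an arbitrary incident piece, and again no non-boundary vertex gains an edge.

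There are two degenerate cases. If $k=0$, all vertices of $f$ are non-boundary, and the same path argument puts the whole face into one mini piece. If $k=1$, the cycle minus $b_1$ is a path of non-boundary vertices, and the two edges at $b_1$ are joined through that path, so again the face lies in one mini piece.

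The second step repeats the argument inside each mini piece $P_i$, now on faces of $P_i$ (including pockets) with respect to micro-boundary vertices, and assigns the new chords to micro pieces. Combining both levels, every final face either consists only of boundary vertices of the corresponding level or lies in a single micro piece.

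The main obstacle is the biconnectivity and planarity bookkeeping. The face boundary must be a simple cycle, so that "subpath between consecutive boundary vertices" is well defined. The chords drawn inside $f$ must not cross each other, which holds because consecutive-pair chords together with a triangulation of the inner polygon form an outerplanar drawing inside the disk bounded by $C_f$. One must also make sure that a chord duplicating an existing edge of $G$ is omitted, so the graph stays simple. Finally, one must check that faces created at the mini level remain simple cycles, so that the micro-level argument applies inside $P_i$.
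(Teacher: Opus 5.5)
Your construction is the paper's: consecutive boundary-to-boundary chords around each simple face cycle, giving one boundary-only face plus pockets, and then the same step inside each mini piece. Your induction along the non-boundary interior vertices of a pocket also makes explicit a step the paper leaves implicit.

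However, one requirement in your last paragraph is wrong, and enforcing it makes the claim false. You say that a chord duplicating an existing edge of $G$ must be omitted so that the graph stays simple. A chord $\{b_t,b_{t+1}\}$ may be skipped only when that edge already lies on $C_f$, which is the rule in your construction paragraph. If $b_t$ and $b_{t+1}$ are adjacent only through an edge drawn elsewhere, omitting the chord leaves the pocket merged with the rest of $f$.

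Here is a concrete counterexample. Embed $K_4$ minus the edge $\{x,y\}$ so that the outer face $f$ is the $4$-cycle $b_1xb_2y$ and the edge $b_1b_2$ is drawn inside it. Let the two edges at $x$ belong to one mini piece and the two edges at $y$ to another. Then $b_1,b_2$ are mini-boundary vertices and $x,y$ are not. The only chords of $f$ are $\{b_1,b_2\}$ and $\{x,y\}$. The chord $\{x,y\}$ would turn $x$ and $y$ into boundary vertices, and $\{b_1,b_2\}$ is forbidden by your simplicity rule. So $f$ remains a face containing non-boundary vertices from two different pieces.

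The paper handles exactly this case explicitly: if an edge with the same endpoints exists elsewhere in the embedding, the new edge is still inserted inside $f$ as a distinct auxiliary (parallel) edge. So you must give up simplicity rather than preserve it. With parallel auxiliary edges allowed, the new faces are still simple cycles (possibly digons), your micro-level repetition goes through, and the rest of your argument matches the paper's.
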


\begin{proof}
Since $G$ is biconnected, every face boundary is a simple cycle. We first add
auxiliary edges between mini-boundary vertices. Consider a face $f$. 
If $f$ contains at most one mini-boundary vertex, then all edges on the
boundary of $f$ are assigned to the same mini piece. Hence all vertices on the
boundary of $f$ are contained in that mini piece.

It remains to consider a face $f$ with at least two mini-boundary vertices. We
choose one mini-boundary vertex of $f$ as the first one and walk once around the
boundary of $f$. Whenever we encounter another mini-boundary vertex $u$, we add
an auxiliary edge between $u$ and the previously encountered mini-boundary
vertex $v$ drawn inside $f$, unless such an edge is already incident to $f$.
If an edge with the same endpoints exists elsewhere in the embedding, we still
treat the new edge as a distinct auxiliary edge drawn inside $f$. We then make
$u$ the previously encountered vertex. After the walk returns to the first
mini-boundary vertex, we add the analogous auxiliary edge between the last and
the first mini-boundary vertex.

The resulting boundary-to-boundary edges split $f$ into subfaces. There is at
most one boundary-only subface, whose boundary consists only of mini-boundary
vertices. Every other subface is bounded by one boundary-to-boundary edge and
at most one subpath of the original face boundary that contains no further
mini-boundary vertex.

We apply the same construction inside each mini piece for its micro-boundary
vertices. Afterward, every remaining face either contains only boundary
vertices or is contained in a single micro piece. 
Thus, after adding the auxiliary mini-boundary and micro-boundary edges, every
face either consists only of boundary vertices of the corresponding level or is
contained in a single micro piece.
\end{proof}

\body{
It remains to show how to triangulate the faces produced by the previous
claim. 
Every face whose boundary consists only of mini-boundary vertices is
triangulated by adding arbitrary non-crossing chords inside that face. Inside
each mini piece, every face whose boundary consists only of micro-boundary
vertices is triangulated analogously by adding non-crossing chords between
micro-boundary vertices.
We regard these chords as \emph{auxiliary
boundary-to-boundary edges}, and we show how to store these
edges in the next claim. All other remaining non-triangular faces are
contained in a single micro piece and are handled later by the lookup table.
}

\begin{claim}\label{cl:ttwo}
All explicitly stored auxiliary boundary-to-boundary edges, including the
chords used to triangulate boundary-only faces, use $o(n)$ bits in total.
\end{claim}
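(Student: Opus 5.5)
The plan is to count the auxiliary edges and charge each one to a boundary-vertex occurrence on the level at which it was added. Then I would store, for each such occurrence, a short list of auxiliary edges in a representation whose size per edge is logarithmic in the universe of labels of that level. Finally, I would invoke Observation~\ref{obs:multiplicities} to sum over all occurrences.

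The first step bounds the number of auxiliary edges. On the mini level, the graph obtained by restricting to mini-boundary vertices together with all auxiliary mini-level edges is planar. The edges added in the walk of Claim~\ref{cl:tone} and the chords triangulating the boundary-only faces all lie inside faces of the original plane graph and do not cross, so together with the original edges they form a plane multigraph. Each auxiliary edge added inside a face $f$ joins two mini-boundary vertices of $f$. The walk adds at most as many edges as there are mini-boundary occurrences on $f$. A boundary-only subface with $k$ vertices receives $k-3$ chords. Summing over faces and over all incidences of boundary vertices with faces could a priori be large, since a boundary vertex may have large degree. I would avoid this by noting that these faces lie in the plane multigraph on the mini-boundary vertices, where consecutive edges are not parallel around a face. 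Euler's formula then gives $O(b)$ edges, where $b=O(n/\plgn)$ is the number of mini-boundary vertices.

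If the parallel-edge issue makes this delicate, the fallback is to charge per face. A face with $k\ge 2$ mini-boundary vertices must, by biconnectivity, have its boundary pass through $k$ distinct piece transitions. Each such transition corresponds to a boundary-vertex incidence with two differently assigned consecutive edges, and the number of such incidences is bounded by the total number of edge-to-piece switches around boundary vertices. That total is $O(\text{boundary labels})$ after contracting each piece's contribution. The same argument inside each mini piece bounds the micro-level auxiliary edges by $O(r_0/\plglgn)$ per mini piece, hence $O(n/\plglgn)$ in total.

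The second step stores the edges. Mini-level auxiliary edges are stored with global labels, using $O(\log n)$ bits per edge, for $O(n\log n/\plgn)=o(n)$ bits; the polylogarithm is large enough because $\alpha$ is large. Micro-level auxiliary edges inside $P_i$ are stored with mini labels, using $O(\log r_0)=O(\log\log n)$ bits each, for $O(n\log\log n/\plglgn)=o(n)$ bits. To place each edge in the embedding, I would also store a rank among the darts at its endpoint, again at $O(\log n)$ or $O(\log\log n)$ bits. Indexing uses rank/select structures (Definition~\ref{def:rs}) over boundary labels, which take $o(n)$ bits.

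The main obstacle is the counting in the first step: making rigorous that faces containing many boundary occurrences cannot cause a superlinear number of auxiliary edges relative to the boundary count, given that a single boundary vertex may lie on many faces. I would try the Euler-formula argument on the auxiliary plane multigraph first.
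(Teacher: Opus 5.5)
\textbf{Verdict: genuine gap, at step~1 (counting the auxiliary edges).} Your plan is the paper's plan. Its proof says only that the explicitly stored mini-level edges, walk edges and chords alike, ``form a planar graph on the mini-boundary vertices'' and so number $O(n/\plgn)$ by sparsity. It then does exactly your storage accounting: $O(\log n)$ bits per endpoint with global labels, and $O(\log\log n)$ bits per endpoint with mini labels. The gap is in step~1, and both of your justifications for it are false.

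Let $G=K_{2,k}$ with hubs $u,v$ and faces $u\,w_i\,v\,w_{i+1}$. Take as mini pieces the blocks of $r_0-2$ consecutive $w_i$ together with $u,v$. This is a valid $r_0$-division whose only mini-boundary vertices are $u$ and $v$.
\begin{itemize}
\item Every face contains both $u$ and $v$, so the walk of Claim~\ref{cl:tone} inserts its own copy of $uv$ into each of the $k=n-2$ faces. That construction explicitly keeps parallel copies.
\item The auxiliary multigraph is then $k$ parallel edges bounding $k$ digons. Each digon contains a non-boundary $w_i$ that is invisible in the multigraph. So consecutive edges around a face \emph{are} parallel, and Euler's formula gives nothing.
\item Your fallback fails on the same graph. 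A face $u\,w_i\,v\,w_{i+1}$ inside one block has two mini-boundary vertices and no piece transition.
\item Now assign the $w_i$ alternately to two pieces; this is still a valid $r_0$-division with two boundary vertices per piece. The rotation at $u$ then has $\Theta(\deg u)$ piece changes, while $u$ has only $O(\deg u/r_0)$ mini labels.
\end{itemize}

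The obstacle you isolated is real, and the paper's proof does not overcome it either. Its sparsity sentence tacitly treats the auxiliary edges as a simple graph. For the construction of Claim~\ref{cl:tone} as stated, the first example yields $\Theta(n)$ explicitly stored mini-level edges.

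A repair would need two changes:
\begin{itemize}
\item Change the construction: add auxiliary edges only in faces whose edges lie in at least two pieces, and only between consecutive piece-transition vertices. A face with $t$ transitions then receives $O(t)$ edges.
\item Assume a property of the division that bounds the total number of piece changes in the rotations at boundary vertices by $O(\text{boundary labels})$.
\end{itemize}
The interleaved example shows this property does not follow from the paper's definition of an $r$-division. There every face spans two pieces and needs its own $uv$ diagonal, since the other diagonal $w_iw_{i+1}$ would create new boundary vertices.
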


\begin{proof}
The explicitly stored boundary-to-boundary edges at the mini level, including
the chords used to triangulate faces containing only mini-boundary vertices,
form a planar graph on the mini-boundary vertices. Hence, by sparsity of planar
graphs, there are only $O(n/\plgn)$ such edges. We store them explicitly, e.g.,
in standard lists, using $O(\log n)$ bits per endpoint. Since $\plgn$ is chosen
sufficiently large, we can store them with $o(n)$ bits.

The same argument applies inside the mini pieces. For each mini piece $P_i$,
the auxiliary edges between micro-boundary vertices, including the chords used
to triangulate faces containing only micro-boundary vertices, form a planar
graph on the micro-boundary vertices of $P_i$. Over all mini pieces, the number
of such edges is bounded by the total number of micro-boundary occurrences,
which is $O(n/\plglgn)$. These edges are stored using local labels inside
$P_i$, and therefore use $O(\log r_0)=O(\log\log n)$ bits per endpoint. Again,
for sufficiently large $\plglgn$, we can store them in $o(n)$ bits total.
\end{proof}

\body{
It remains to explain how the two kinds of added edges are represented. Edges
whose endpoints lie in the same micro piece are handled entirely by the lookup
table. 
The only edges that require an explicit representation outside the lookup
table are the auxiliary boundary-to-boundary edges.
}

\begin{claim}\label{cl:tthree}
All remaining non-triangular faces contained in micro pieces can be
triangulated by the lookup table.
\end{claim}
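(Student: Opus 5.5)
By Claim~\ref{cl:tone}, after the auxiliary boundary-to-boundary edges have been added, every remaining non-triangular face is a face all of whose vertices, and hence all of whose boundary edges, lie in a single micro piece $P_{i,j}$. The plan is to triangulate each such face by chords that stay inside $P_{i,j}$ and are determined by the micro piece alone. These chords are then encoded in the lookup-table entry of $P_{i,j}$ via the table-swap mechanism (\ref{def:ts}), so no explicit storage outside the table is needed.

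\textbf{Step 1: fix a canonical local triangulation.} For every possible table entry, that is, a micro piece with its local embedding and its set of distinguished (micro-boundary) vertices, define the following deterministic rule. Take each face of the local embedding that lies entirely in the micro piece and is not one of the faces closed off by auxiliary edges. Fan-triangulate it from its vertex of smallest micro label; if that creates a parallel edge, use the standard fix of a zigzag triangulation. Since the face is a simple cycle (the graph is biconnected) and all its vertices belong to $P_{i,j}$, every added chord joins two vertices of $P_{i,j}$. We need to check that none of these chords creates a new boundary vertex. By Claim~\ref{cl:tone}, the face is bounded by at most one auxiliary boundary-to-boundary edge plus a subpath containing no further boundary vertex. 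A chord incident to a boundary endpoint is assigned to $P_{i,j}$ anyway, so the piece assignment of edges changes only inside $P_{i,j}$. Hence the boundary structure of the nested division stays unchanged.

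\textbf{Step 2: make the result available through the table.} The auxiliary boundary-to-boundary edge on such a face is stored explicitly (Claim~\ref{cl:ttwo}). The micro piece therefore needs to know which of its faces are open, i.e. which faces are to be triangulated locally. This information is determined by the micro piece, its embedding, and the pairs of consecutive boundary vertices joined by auxiliary edges. Only $O(r_1^{1/2+\eps})$ such pairs exist, so they fit as a constant number of orderings or colorings of the distinguished vertices, as the table-swap definition permits. Extend the table so that each entry additionally encodes $G^\triangle$ restricted to the piece. Adjacency, degree, neighborhood and dart queries (next/previous dart on a face, reverse dart) can then be precomputed for both $G$ and $G^\triangle$ in polynomial time per entry. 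Because $r_1=(\log\log n)^\alpha$, the number of entries stays $2^{O(r_1)}=o(n^{\delta})$, and the table occupies $o(n)$ bits. Each micro piece keeps an index of the same length up to lower-order terms, so succinctness or compactness is preserved.

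\textbf{Main obstacle.} The delicate point is consistency at the interface between locally triangulated faces and the explicitly stored auxiliary edges. Dart traversal in $G^\triangle$ must move correctly from a table-stored chord to an explicit auxiliary edge and back. I would handle this by letting the table report, for each boundary vertex of $P_{i,j}$, the local rotation position at which the auxiliary edge is inserted; this is again an ordering of distinguished vertices. The global dart structure then splices explicit edges in at those positions exactly as Blelloch and Farzan splice inter-piece edges.
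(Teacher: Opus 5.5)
Your proposal is correct and follows essentially the same route as the paper. The only outside information a micro piece needs is which pairs of its micro-boundary vertices are already joined by explicitly stored auxiliary edges; this fits into the table-swap state of \ref{def:ts}, and the lookup table then applies a precomputed fixed triangulation of the remaining faces using only edges with both endpoints in the piece. Your extra details (the canonical fan rule, the table-size count, and recording the rotation slots of the auxiliary edges) go beyond what the paper writes for this claim, and the paper treats the slot and splicing issue separately in Claim~\ref{cl:tfour}.
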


\begin{proof}
By the previous claims, every remaining face that is not triangulated by
explicitly stored boundary-to-boundary edges is contained in a single micro
piece. The only information about the outside that is relevant to a micro piece
is which pairs of its boundary vertices have already been connected by
auxiliary boundary-to-boundary edges. 
This information records only the auxiliary edges that are actually inserted,
and their number is linear in the number of vertices of the micro piece.
Thus, it is of the form supported by the table-swap
operation and can be encoded in the lookup table (\ref{def:ts}). The table-swap operation then
adds a fixed triangulation of all remaining faces using only edges whose
endpoints lie in the same micro piece. The local embedding of these edges can be
precomputed.
\end{proof}

\begin{claim}\label{cl:tfour}
The explicitly stored auxiliary boundary-to-boundary edges, including their
positions in the cyclic order, can be represented within $O(n)$ bits using
linear construction time, or $o(n)$ bits using expected linear construction
time.
\end{claim}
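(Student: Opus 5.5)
The plan is to mirror the way Blelloch and Farzan store inter-piece edges and embedding information, and to plug the auxiliary edges into that same machinery. By Claim~\ref{cl:ttwo}, there are $O(n/\plgn)$ auxiliary edges at the mini level and $O(n/\plglgn)$ at the micro level. We need to attach them to their endpoints and record where each one sits in the cyclic order of darts around that endpoint. The positional data is stored only at boundary occurrences. Since every auxiliary edge joins two boundary vertices of the corresponding level, this suffices.

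The mini level comes first. For every mini-boundary vertex $u$ we store a list of its incident auxiliary edges. Each entry gives the other endpoint by its global label and the position in the rotation of $u$, meaning the index of the original dart of $u$ after which the new dart is inserted. Both take $O(\log n)$ bits, so the total is $O(n\log n/\plgn)=o(n)$. We index these lists by boundary vertex using a rank/select structure (Definition~\ref{def:rs}) over the set of mini-boundary labels. If we use a compressed indexable dictionary~\cite{RamanRRS07}, this costs $o(n)$ bits but only expected linear construction time. If instead we use a plain bit vector with rank/select support over the universe $[n]$, this costs $O(n)$ bits and deterministic linear construction time, which gives exactly the two variants in the claim.

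The micro level is handled in the same way inside each mini piece $P_i$. Endpoints are recorded by mini labels in $[r_0]$, and positions by local rotation indices. Both need $O(\log\log n)$ bits, for $O(n\log\log n/\plglgn)=o(n)$ bits in total. We index these per mini piece by rank/select over the micro-boundary labels, with the same dictionary-versus-bit-vector choice. The lists are built by walking each face once, as in the proof of Claim~\ref{cl:tone}, and by computing chords of boundary-only faces in time linear in the face size. Bucketing the edges by endpoint with a counting sort over the boundary ranks then gives linear total time.

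The main obstacle is answering dart queries such as next and previous around a vertex, and next on a face, in $O(1)$ time once auxiliary darts are interleaved with the original rotation. A single vertex may receive many auxiliary edges, so a linear scan is not acceptable. I would first try sorting each boundary vertex's auxiliary darts by insertion position and storing, per original dart, a pointer to the first auxiliary dart inserted after it. Doing this for every dart is too costly, so I would mark only the darts that are followed by inserted ones, using a rank/select bit vector over the darts of boundary occurrences. The number of boundary occurrences is $o(n)$ by Observation~\ref{obs:multiplicities}, but their degrees can be large. Their total degree is still $O(n)$, so this bit vector fits in $O(n)$ bits, or in $o(n)$ bits as a compressed dictionary because only $O(n/\plgn)$ bits are set. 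With this index, a rotation step either falls inside an auxiliary block or jumps back to the original embedding query of Lemma~\ref{lem:nested}, each in constant time.
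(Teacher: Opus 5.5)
Your proposal is correct and takes essentially the paper's approach. The paper likewise places the auxiliary darts in contiguous blocks attached to the slots after original darts at boundary vertices, and marks the nonempty slots in an indexable dictionary whose rank gives the block index. It stores successors and reverse darts explicitly, and obtains the two variants by choosing either a compressed dictionary (succinct, expected linear time) or a plain $\ell+o(\ell)$-bit one (compact, deterministic linear time). One small correction: at the micro level the number of marked slots is $O(n/\plglgn)$ rather than $O(n/\plgn)$, which still gives $o(n)$ bits.
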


\begin{proof}
For every boundary vertex, consider the slots in its cyclic order, where a slot
is the position immediately after an original dart. 
All auxiliary darts drawn inside the same original face and incident to the
same boundary vertex are placed in the slot corresponding to that face.
If boundary-to-boundary
darts are inserted into such a slot, they form one contiguous block in the
cyclic order. We store all such blocks explicitly, in the order of their slots,
and mark the corresponding slots in an indexable dictionary.

Now suppose that we want to move from an original dart to its successor in the
triangulated embedding. Let $k$ be the slot following this dart. If $k$ is not
marked, the successor is the next original dart. If $k$ is marked, then
$\operatorname{rank}(k)$ gives the index of the corresponding block of
auxiliary darts, and the successor is the first dart stored in that block.
Inside a block, successors are stored explicitly. After the last dart of a
block, the successor is the next original dart after the slot. The reverse dart
of every auxiliary dart is stored explicitly with the corresponding auxiliary
edge.
The same construction is used for mini-boundary vertices and, inside each mini
piece, for micro-boundary vertices. 

By Claim~\ref{cl:ttwo}, the auxiliary edge records themselves use only
$o(n)$ bits. The explicit blocks contain one dart occurrence per inserted
auxiliary edge, and hence also use $o(n)$ bits in total. It remains only to
store the marked slots.
If the marked slots are stored using compressed indexable dictionaries, then
the triangulated representation remains succinct. This introduces the same
construction-time caveat as the succinct nested division itself: no
deterministic linear-time construction of the required compressed dictionaries
is known, but expected linear-time constructions are
available~\cite{FeigenblatPS16,RamanRRS07}.
If succinctness is not required, we instead use standard indexable dictionaries
using $\ell+o(\ell)$ bits for a universe of size $\ell$. These can be
constructed deterministically in $O(\ell)$ time and $O(\ell)$
bits~\cite{BaumannH19,Jacobson88}.
\end{proof}

\begin{lemma}\label{lem:triangulate}
Let $G$ be a biconnected plane graph for which a succinct or compact nested
division is available. 
Then the representation can be augmented so that the nested division is
triangulated in the sense of \ref{par:triangulated}, while preserving the
corresponding succinct or compact space bound.
\end{lemma}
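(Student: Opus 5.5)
The lemma follows by assembling Claims~\ref{cl:tone}--\ref{cl:tfour}. I would proceed in three steps: first fix the set of added edges, then fix how each kind is stored, and finally check that all queries for both $G$ and $G^\triangle$ run in constant time within the stated space.

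First, invoke Claim~\ref{cl:tone}. Since $G$ is biconnected, every face is a simple cycle. Walking around each face and connecting consecutive mini-boundary vertices, and then consecutive micro-boundary vertices inside each mini piece, splits every face into two kinds of subfaces: boundary-only subfaces of the respective level, and subfaces contained in a single micro piece. The boundary-only subfaces are triangulated by arbitrary non-crossing chords between boundary vertices of the same level. The resulting supergraph $G^\triangle$ is planar, every face is a triangle once the micro-local faces are handled, and no new edge joins two non-boundary vertices of different pieces. Hence the boundary structure of $\cD$ is unchanged, and the nested division stays valid for $G^\triangle$ with the same pieces.

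Second, store the two kinds of added edges separately.
\begin{enumerate*}[label=(\roman*)]
\item The auxiliary boundary-to-boundary edges, including the chords, are stored explicitly. By Claim~\ref{cl:ttwo} the edge records take $o(n)$ bits. By Claim~\ref{cl:tfour} their positions in the cyclic orders can be represented with rank/select over marked slots: compressed indexable dictionaries give $o(n)$ bits with expected linear construction, and plain indexable dictionaries give $O(n)$ bits with deterministic linear construction.
\item All remaining non-triangular faces lie inside single micro pieces. By Claim~\ref{cl:tthree} they are triangulated through the table-swap operation (\ref{def:ts}). We re-index each micro piece to a table entry that also encodes which pairs of its micro-boundary vertices are already joined by auxiliary edges, together with a fixed local triangulation and its embedding. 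This is the step I expect to need the most care. The local state must fit the table-swap format, meaning a constant number of colorings or orderings of at most $r_1^\gamma$ distinguished vertices. To justify this, I would note that the auxiliary edges among micro-boundary vertices form a planar graph on $o(r_1)$ vertices, so they can be encoded by a constant number of colorings from a universe polynomial in $r_1$. The table size then stays $o(n)$, and the index space per micro piece increases only by the amount already accounted for in the succinct bound.
\end{enumerate*}

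Third, verify the interface. Queries for the original $G$ ignore the auxiliary structures: we skip marked slots, and table answers distinguish original from added edges. Queries for $G^\triangle$ combine three pieces. The table answers local successor queries, the rank on marked slots handles explicit blocks, and translation mappings handle moves across levels. Each query takes constant time. The total extra space is $o(n)$ in the succinct variant and $O(n)$ in the compact variant, so the representation remains succinct, respectively compact, as required by \ref{par:triangulated}.
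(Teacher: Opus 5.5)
Your proposal is correct and takes the same approach as the paper's proof. Like the paper, you assemble Claims~\ref{cl:tone}--\ref{cl:tfour} in the same order: auxiliary boundary-to-boundary edges and chords are stored explicitly with rank over marked slots, the remaining faces are triangulated inside micro pieces via table-swap, and queries for $G$ are answered by the original representation. Your extra remark that the micro-piece state fits the table-swap format (a planar graph on $o(r_1)$ distinguished vertices, encodable by constantly many colorings because planar graphs have bounded degeneracy) spells out a little more explicitly what the paper only asserts in Claim~\ref{cl:tthree}.
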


\begin{proof}
    By Claim~\ref{cl:tone}, after inserting auxiliary boundary-to-boundary edges, every
face either contains only boundary vertices or is contained in a single micro
piece. Faces containing only boundary vertices are triangulated by explicitly
stored boundary-to-boundary chords. By Claim~\ref{cl:ttwo}, all explicitly stored
boundary-to-boundary edges use only $o(n)$ bits in total. By Claim~\ref{cl:tthree}
the remaining faces are contained in micro pieces and are triangulated by the
lookup table. By Claim~\ref{cl:tfour}, the cyclic order around the explicitly
stored auxiliary edges can be represented within the corresponding additional
space bound. Therefore the triangulated graph can be represented with the
claimed succinct or compact space usage.

The original graph and embedding queries are answered by the original
representation. Added edges inside micro pieces are handled by the lookup
table, and explicitly stored auxiliary edges are handled by the blocks and edge
records from Claim~\ref{cl:tfour}. Therefore, the augmented representation
supports graph and embedding queries both for $G$ and for $G^\triangle$.
\end{proof}

\section{Executing a BFS inside a Succinct Nested Division}
\label{sec:bfs}

\body{We describe how to execute a breadth-first search directly on a succinct nested
division, building the BFS tree layer by layer. In this section, let $G=(V,E)$
be a connected planar graph given by a succinct nested division $\cD$,
and let $s\in V$ be the start vertex of the BFS.}

\subsection{Active Pieces}
\body{For a vertex $u$, we write $\ell(u)$ for its BFS layer, and $L_d\coloneqq\{u\in
V\mid \ell(u)=d\}$ for the $d$-th BFS layer. Round $d$ of the BFS processes the
vertices of $L_d$ and attaches all previously unreached neighbors as vertices
of $L_{d+1}$. Observe that the order in which vertices of $L_{d+1}$ are
discovered is irrelevant.}

\body{Instead of storing the set $L_d$ explicitly, we maintain the pieces that
    still have to be processed in the current round. We call such pieces
    \emph{active}. More precisely, the algorithm maintains a list of active
    mini pieces $P_i$, and for each $P_i$, a list of active micro pieces
    $P_{i,j}$ inside $P_i$. In addition, for each active mini piece $P_i$, we
    store the mini-boundary vertices of $P_i$ that lie in the current layer and
    have to be processed, using their global labels. Processing such a
    mini-boundary vertex $u$ means iterating over all pairs $(i,u_i)$
    representing occurrences of $u$ in mini pieces $P_i$, and then over all
    edges incident to $u_i$ inside $P_i$. For every previously unreached
    neighbor reached through such an edge, we assign layer $d+1$ and store $u$
    as its parent. Processing an active mini piece $P_i$ consists of processing
    these active mini-boundary vertices explicitly and processing all active
    micro pieces inside $P_i$. Processing an active micro piece advances the
BFS locally inside the micro piece by a table-swap operation.}

\body{This may also create side effects outside the micro piece when
boundary vertices are reached; these side effects and the local micro-piece
state are defined concretely in the next paragraphs.}

\body{Layer and parent information is stored as follows. For every
    mini-boundary vertex $u$, we store $\ell(u)$ and its parent explicitly. For
    a micro-boundary vertex $w$ given by its mini label $w_i$ inside a mini
    piece $P_i$, we store a reference to the mini label $u_i$ of the closest
    reached mini-boundary vertex $u$ of $P_i$ on the root-to-$w$ path in the
    BFS tree, together with the offset $\ell(w)-\ell(u)$. The parent of $w$ is
    stored explicitly by its mini label in $P_i$. For vertices strictly inside
    micro pieces, offsets, references, and parent information are computed on
    the fly via table lookup. Whenever we reach a previously unreached vertex,
    the vertex is assigned to layer $d+1$. If the reached vertex lies in
    another mini or micro piece, the corresponding piece is marked active for
    the next round. Thus, processing the active pieces of round $d$ constructs
    the active pieces of round $d+1$.}

\subsection{Partial BFS States of Micro Pieces.}

\body{Let $P_{i,j}$ be a micro piece. The local state of $P_{i,j}$ is
    represented by a \emph{partial BFS state}, consisting of a \emph{boundary
    profile} and a \emph{local advance}. The boundary profile stores a constant
    number of colors per micro-boundary vertex of $P_{i,j}$, together with an
    ordering of these boundary vertices. The assigned colors encode whether a
boundary vertex is visited or unvisited by the BFS, and additionally mark
boundary vertices whose layer is currently being processed.}

\body{The ordering encodes the relative order of the boundary layers, including
    ties; no absolute layer values, offsets, parents, or references to
    mini-boundary labels are stored in the table state. These values are stored
    externally for the micro-boundary labels, as described above. The local
    advance records how far the BFS has progressed inside $P_{i,j}$ while the
    boundary profile has remained unchanged. This is needed because several
    consecutive BFS layers may be processed strictly inside $P_{i,j}$ without
    visiting any new micro-boundary vertices of $P_{i,j}$. Given the boundary
    profile and the local advance, the lookup table determines all local
    information inside $P_{i,j}$: the reached internal vertices, their local
    layer offsets, their local parent choices, and the current local layer to
be processed. }

\body{A valid next state is obtained by processing the current local layer
    using only edges assigned to $P_{i,j}$. If several valid next states are
    possible, e.g., due to different parent choices, we choose one arbitrarily
    and fix it in the lookup table. If no new micro-boundary vertex is reached,
    only the local advance changes. If new micro-boundary vertices are reached,
    then their colors and their positions in the boundary order are updated. A
    table-swap query returns the new state, the list of newly reached
    micro-boundary vertices, and a flag indicating whether $P_{i,j}$ should
    remain active in the next round. This flag is set exactly if, after the
    local update, $P_{i,j}$ contains a reached vertex of the next BFS layer
whose incident edges assigned to $P_{i,j}$ still have to be processed.}

\begin{claim}\label{clm:table}
The next partial BFS state of an active micro piece $P_{i,j}$ can be computed
by one table-swap query. The query reports the micro-boundary vertices of
$P_{i,j}$ reached for the first time in the next state and whether $P_{i,j}$
should remain active.
\end{claim}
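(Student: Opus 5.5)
The plan is to verify that the partial BFS state of $P_{i,j}$ fits the format allowed by the table-swap operation (\ref{def:ts}). The transition to the next state must then be a polynomial-time computable function of the current table entry alone, so it can be precomputed for every entry.

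\emph{Encoding.} The boundary profile consists of a constant number of colors per micro-boundary vertex (visited, unvisited, in the current layer) together with one ordering of the boundary vertices. Ties in the relative boundary layers can be encoded either by a second constant-size coloring or by folding the tie information into the ordering, e.g., as an ordering plus a ``same layer as predecessor'' bit. Since the micro-boundary vertices are exactly the distinguished vertices, of which there are $O(r_1^{1/2+\eps})\le r_1^\gamma$ for a suitable $\gamma<1$, this matches the allowed form. The local advance is a single number bounded by the number of local layers, hence by $r_1$. It is therefore one color from $[\lceil r_1^c\rceil]$ stored in the table entry itself. Thus every partial BFS state corresponds to one extended table index.

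\emph{Determinism of the local computation.} Fix an entry, i.e., a micro piece together with its profile and advance. I would define the local BFS as a multi-source BFS on the edges assigned to $P_{i,j}$. Its sources are the reached boundary vertices, injected at the relative times dictated by the boundary ordering: a boundary vertex of relative rank $k$ enters at local time equal to its offset. This BFS is run for a number of steps equal to the advance, and ties in parent choice are broken by a fixed rule. This determines the reached internal vertices, their offsets and parents, and the current local layer, as the paper asserts. All of this is computable in time polynomial in $r_1$.

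\emph{Transition.} The transition processes the current local layer, i.e., it scans the edges of that layer within $P_{i,j}$ and marks unreached endpoints as reached in the next layer. Three quantities are then derived. First, the newly reached micro-boundary vertices are those whose color flips from unvisited to visited. Second, the updated ordering inserts these vertices at the position of the next layer. Third, if no boundary vertex changed, the advance is incremented; otherwise the advance is reset relative to the new profile. The flag ``remain active'' is set iff some vertex reached at the next layer still has unprocessed incident edges in $P_{i,j}$. Each output is a polynomial-time function of the entry. The list of newly reached boundary vertices has length at most $r_1^\gamma$ and can be stored in the table alongside the answer, so the query is $O(1)$ time per output element. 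Consequently one table-swap returns the new index, the list, and the flag.

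\emph{Main obstacle.} The delicate step is ensuring that the profile plus advance really determine the local state canonically. The absolute layers of boundary vertices, which are stored externally, must matter only through their relative order and the gaps relevant inside the piece. I would argue that gaps larger than the piece diameter $r_1$ can be capped, so only relative order with bounded gaps affects local distances. If needed, I would store these capped gaps as additional colors from $[\lceil r_1^c\rceil]$. The advance should also be reset consistently whenever the profile changes, so that identical profiles with equal advances always encode the same local configuration.
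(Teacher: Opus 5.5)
\textbf{Same route as the paper, but the capped gaps are mandatory.} Your approach matches the paper's. You represent the partial BFS state of $P_{i,j}$ (boundary profile plus local advance) in the format the table-swap operation allows, and precompute, for every entry, the successor state, the newly reached micro-boundary vertices, and the activity flag. The one change needed is that the capped gaps in your last paragraph are not optional. As written, your encoding stores only the order with ties and the advance. Your determinism step, however, injects each boundary vertex ``at local time equal to its offset'', and that quantity lives outside the table entry.

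\textbf{Why order plus advance is not enough.} Order with ties, plus an advance that resets on every profile change, does not determine the local configuration. Let $P_{i,j}$ be the path $a,x_1,x_2,x_3,x_4,b$ with only $a,b$ micro-boundary, and compare $\ell(b)=\ell(a)+1$ with $\ell(b)=\ell(a)+3$. At the start of round $\ell(b)$ both cases give the same profile ($a$ visited; $b$ visited and current; $a$ before $b$) and the same advance. Yet the local configurations differ:
in the first case only $x_1$ is reached and $x_4$ will get layer $\ell(a)+2$;
in the second, $x_1,x_2,x_3$ are reached and $x_4$ gets layer $\ell(a)+4$.

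\textbf{The fix works.} The entry must additionally carry, for example, the gap from each reached boundary vertex to its predecessor in the order, capped at $r_1$. This is one more coloring with colors from $[\lceil r_1^c\rceil]$ for a constant $c>1$. Together with the advance, these gaps give the capped ages $\min(t-\ell(b),r_1)$ of all reached boundary vertices $b$. The capped ages determine the reached internal vertices, their offsets, the parent choices, and the current local layer. Your capping argument is correct, because local distances are below $r_1$. The paper's sketch says it stores only the relative order (``no absolute layer values, offsets, \ldots''), so it passes over exactly this point. Your refinement, made non-optional, is what makes the claim hold.

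\textbf{Stalled advance during inactivity.} The advance changes only when the piece is processed, so it stalls while $P_{i,j}$ is inactive. When a boundary vertex is later inserted by a synchronization, you have two options:
either compute its capped gap from the externally stored layers;
or observe that an inactive piece is locally saturated, with all its reached vertices in layers up to the stalled time. Then any recorded gap that places the new vertex after that time yields the same future local behavior.
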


\subsection{Exceptional Pieces.}
\body{There are two exceptional cases that we handle explicitly.}

\body{First, suppose the
start vertex $s$ is not a mini-boundary vertex and lies in some mini piece
$P_i$. Then, at the beginning of the BFS, the search inside $P_i$ has no
reached mini-boundary vertex that can serve as a reference for storing relative
layer offsets, as assumed above. We therefore handle the starting mini piece
explicitly, treating all vertices of $P_i$ as if they were mini-boundary
vertices for the purpose of storing layer and parent information. This affects
only one mini piece of size $O(\plgn)$ and is thus negligible.}

\body{The second
exception concerns mini labels of mini-boundary vertices that are not
represented by micro-boundary labels. Let $u$ be such a vertex, let $u_i$ be a
mini label of $u$ in a mini piece $P_i$, and let $P_{i,j}$ be the unique micro
piece of $P_i$ containing $u_i$, where $u_i$ has micro label $u_{i,j}$. When
$u$ is reached, its layer and parent information must be made available inside
$P_{i,j}$. However, the table index of $P_{i,j}$ stores the boundary profile
only for micro-boundary labels, and therefore does not allow changing the state
of $u_{i,j}$ directly.}

\body{We handle every such micro piece explicitly with ordinary data
structures. There are $O(n/\plgn)$ micro pieces containing mini labels of
mini-boundary vertices, each of size $O(\plglgn)$, so their total size is
$o(n)$. Hence, all exceptional pieces use $o(n)$ bits and contribute only
$o(n)$ time. All remaining micro pieces contain no mini labels of mini-boundary
vertices and are processed by table-swap operations. }

\subsection{Boundary synchronization.}
\body{Consider an active micro piece $P_{i,j}$ during round $d$. By
    Claim~\ref{clm:table}, the table-swap query for $P_{i,j}$ reports exactly those
    micro-boundary vertices that are reached for the first time by the local
    update. Let $w_i$ be the mini label of such a vertex inside $P_i$. Then the
    new layer and parent information of $w_i$ must be copied to all micro
labels $w_{i,j'}$ of $w_i$ in the micro pieces $P_{i,j'}$ containing it. We
call such a propagation a \emph{boundary synchronization}; see
\Cref{fig:cascade}.}

\begin{figure}[ht!]
\centering
\begin{tikzpicture}[
    piece/.style={draw, line width=0.9pt},
    leftpiece/.style={piece, fill=blue!10},
    rightpieceA/.style={piece, fill=green!12},
    rightpieceB/.style={piece, fill=green!35},
    layer/.style={draw=gray!75, dashed, line width=0.9pt},
    newlayer/.style={draw=red!75, thick},
    bvertex/.style={circle, draw, fill=black, inner sep=0pt, minimum size=4.8pt}
]
\begin{scope}[yscale=0.5]
\coordinate (LLsw) at (-6.4,-1.5);
\coordinate (LLne) at (-4.2, 1.5);

\coordinate (Lsw)  at (-4.2,-1.5);
\coordinate (Lne)  at ( 0.0, 1.5);

\coordinate (RTsw) at ( 0.0, 0.0);
\coordinate (RTne) at ( 3.2, 1.5);

\coordinate (RBsw) at ( 0.0,-1.5);
\coordinate (RBne) at ( 3.2, 0.0);

\draw[leftpiece]   (LLsw) rectangle (LLne);
\draw[piece]   (Lsw)  rectangle (Lne);
\draw[rightpieceA] (RTsw) rectangle (RTne);
\draw[rightpieceB] (RBsw) rectangle (RBne);

\begin{scope}
    \clip (Lsw) rectangle (Lne);
    \coordinate (o) at (-2,-1.5);
    \draw[layer]    (o) circle[radius=0.50];
    \draw[layer]    (o) circle[radius=1.00];
    \draw[layer]    (o) circle[radius=1.50];
    \draw[layer]    (o) circle[radius=1.90];
    \draw[newlayer] (o) circle[radius=2.50];
\end{scope}

\begin{scope}
    \clip (RTsw) rectangle (RTne);
\end{scope}

\begin{scope}
    \clip (RBsw) rectangle (RBne);
    \coordinate (o) at (3,-1.5);
    \draw[layer] (o) circle[radius=0.50];
    \draw[layer] (o) circle[radius=1.00];
\end{scope}

\begin{scope}
    \clip (LLsw) rectangle (LLne);
    \coordinate (o) at (-6,-1.5);
    \draw[layer] (o) circle[radius=0.50];
    \draw[layer] (o) circle[radius=1.00];
    \draw[layer] (o) circle[radius=1.50];
\end{scope}

\node[bvertex,label=above left:$u$] at (-4.2,-.21) {};
\node[bvertex,label=above right:$v$] at (0,0) {};
\end{scope}
\end{tikzpicture}
\caption{Illustration of two boundary synchronizations in a micro piece. The
two boundary vertices $u$ and $v$ must update their status in all other micro
pieces in which they are contained.}
\label{fig:cascade}
\end{figure}

\body{There are $O(n/\plglgn)$ micro-boundary labels over all micro pieces
    (Obs.~\ref{obs:multiplicities}). Each such label is updated at most once,
    namely when the corresponding vertex is reached for the first time.
    Updating a label means changing the boundary profile of its micro piece
    $P_{i,j}$ by one table-swap operation and, if necessary, marking $P_{i,j}$
    active for the next round. Thus, all micro-boundary synchronizations cause
    $O(n/\plglgn)$ table-swap operations in total. The same argument applies to
    mini-boundary vertices: when a mini-boundary vertex $u$ is reached for the
    first time, all mini labels $u_i$ of $u$ are marked as reached, and the
    corresponding mini pieces are marked active for the next round. Since there
    are only $O(n/\plgn)$ mini-boundary labels in total, these updates are
negligible compared with the micro-boundary synchronizations.}

\begin{claim}\label{clm:synchronization}
All boundary synchronizations can be processed in $o(n)$ time total.
\end{claim}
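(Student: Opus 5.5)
We prove Claim~\ref{clm:synchronization} by a charging argument: every unit of work done during synchronization is charged to a single micro-boundary label or mini-boundary label, and each such label is charged $O(1)$ times over the whole BFS. Observation~\ref{obs:multiplicities} then bounds the total.

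\emph{Micro level.} A synchronization for a micro-boundary vertex is triggered when a table-swap query (Claim~\ref{clm:table}) reports it as newly reached. Let $w_i$ be its mini label in $P_i$. We use the translation mapping to enumerate all pairs $(j',w_{i,j'})$; this takes $O(1)$ time per output pair. For each pair we do three things:
\begin{itemize}
\item write the reference to the closest mini-boundary vertex, the offset, and the parent of $w_i$, which are stored once per micro-boundary label (or once per mini label $w_i$);
\item perform one table-swap on $P_{i,j'}$ to recolor $w_{i,j'}$ as reached and insert it into the boundary order;
\item append $P_{i,j'}$ to the active list of $P_i$ if it is not already there. This is an $O(1)$ operation if we keep one ``active'' bit per micro piece. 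These bits total $O(n/\plglgn)$, since there are that many micro pieces.
\end{itemize}

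Inserting $w_{i,j'}$ into the boundary order must be $O(1)$. Because $w$ is reached in round $d$, its layer is $d+1$, and every other reached boundary vertex has layer at most $d+1$. So $w$ goes at the top of the order, either tied with the current maximum or one step above it. The piece's own state tells us which case applies, so this is a single table-swap. A vertex is reached for the first time exactly once, so each micro-boundary label is charged $O(1)$. With $O(n/\plglgn)$ labels, the micro-level cost is $O(n/\plglgn)=o(n)$.

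\emph{Mini level.} When a mini-boundary vertex $u$ is reached, we enumerate its $O(1)$-time-per-element pairs $(i,u_i)$. For each one we:
\begin{itemize}
\item mark $u_i$ as reached;
\item add $u$ to the list of pending mini-boundary vertices of $P_i$;
\item activate $P_i$.
\end{itemize}
If $u_i$ is also a micro-boundary label, this feeds into the micro-level synchronization above. Otherwise $u_i$ lies in an exceptional micro piece, which is handled by explicit data structures in $O(1)$ time. Each mini label is charged $O(1)$, so the total is $O(n/\plgn)$.

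\emph{Main obstacle: cascades.} A synchronization updates other micro pieces, and this might seem to trigger further reports of newly reached vertices. The table-swap used for synchronization only changes the boundary profile. It does not advance the BFS; advancement happens only when the piece is processed in the next round. So synchronizations do not cascade within a round, and the charging argument above remains valid.

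Summing the micro- and mini-level costs, all boundary synchronizations take $o(n)$ time in total.
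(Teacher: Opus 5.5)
Your charging argument is the paper's argument. Each micro-boundary label is updated only once, when its vertex is first reached, at the cost of one table-swap plus an $O(1)$ activation. Observation~\ref{obs:multiplicities} bounds the number of such labels by $O(n/\plglgn)$, and the analogous count for mini-boundary labels, $O(n/\plgn)$, is negligible.

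Your remarks on inserting into the boundary order and on the absence of cascades go beyond what the paper spells out, but they are consistent with it. One point there is not justified: the piece's state alone may not reveal whether a boundary vertex of layer $d+1$ is already present. You can settle this instead by reading the current maximum's externally stored offset, which takes $O(1)$ time, so the bound is unaffected.
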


\subsection{Running Time, Space, and Correctness.}

\body{Recall that the BFS proceeds in rounds: round $d$ finds all vertices of layer
$L_{d+1}$ by processing the active mini and micro pieces of layer $L_d$. We
first analyze the cost of maintaining and processing one round, and then sum
over all rounds. Active pieces are stored by lists of piece ids. To suppress
duplicate insertions, we keep bitvectors for the current and next round: one
bitvector over the mini pieces, and, inside each mini piece $P_i$, one
bitvector over the micro pieces of $P_i$. 
Between rounds, the entries set in these bitvectors are cleared by scanning the
corresponding active lists and resetting the respective entries to zero.
For each active mini piece $P_i$, we additionally store the list of active
mini-boundary vertices of $P_i$. }

\body{The bitvectors over all mini and micro pieces use $O(n/\plglgn)=o(n)$
bits. The active lists also use $o(n)$ bits: mini-piece ids need $O(\log n)$
bits, while micro-piece ids are stored locally inside their mini pieces and
need only $O(\log\log n)$ bits, and the polylogarithmic exponents were chosen
sufficiently large.
The layer and parent information also uses $o(n)$ bits:
mini-boundary vertices store global layer and parent values explicitly,
micro-boundary vertices store offsets and mini labels, 
both using $O(\log\log n)$ bits,
and internal vertices of
micro pieces are handled by the table lookup.}

\body{Each round processes all active
mini pieces and active micro pieces as follows. For every active mini piece
$P_i$, we process its active micro pieces $P_{i,j}$ by table-swap operations.
Each such operation processes the current layer inside $P_{i,j}$. Summed over
all rounds, the number of table-swap operations is $O(n)$, since each vertex
occurrence is processed only in its BFS layer. An active mini-boundary
vertex $u$ is processed explicitly by scanning all mini labels $u_i$ of $u$ and
all edges incident to $u_i$ inside the corresponding mini piece $P_i$. 
Since each edge is assigned to
exactly one mini piece and vertices are processed only in their BFS layer,
these scans contribute $O(n)$ time in total. All
boundary synchronizations take $o(n)$ time (Claim~\ref{clm:synchronization});
exceptional pieces contribute only $o(n)$ time and space.}

\body{Correctness follows by induction over the rounds. At the beginning of round
$d$, every vertex at distance at most $d$ from $s$ has been reached, all
vertices of layer $L_d$ that still have incident assigned edges to process are
represented by active pieces, and all reached boundary labels have been
synchronized. Processing all active mini pieces therefore considers every edge
incident to a vertex of $L_d$: edges incident to active mini-boundary vertices
are scanned explicitly, and all remaining edges are processed inside active
micro pieces. Every previously unreached neighbor reached in this way is
assigned to layer $d+1$ and receives a parent in $L_d$. Conversely, every
vertex at distance $d+1$ has a neighbor in $L_d$, and the edge to such a
neighbor is considered during round $d$, so the vertex is reached in this
round. Thus, the invariant holds for round $d+1$. When no active piece remains,
all vertices have been reached, and the stored parent pointers form a BFS tree
rooted at $s$.}

\begin{lemma}\label{lem:bfs}
Let $\cD$ be a succinct nested division of a connected planar graph
$G=(V,E)$. A BFS tree rooted at any $s \in V$ can be computed in $O(n)$ time
using $o(n)$ additional bits.
\end{lemma}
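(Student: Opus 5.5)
The plan is to assemble the lemma from the pieces already set up in this section: the layered execution over active pieces, the table-swap advance of Claim~\ref{clm:table}, the synchronization bound of Claim~\ref{clm:synchronization}, and the handling of exceptional pieces. The lemma follows by putting together three ingredients: a correctness invariant, a time bound summed over rounds, and a space bound for every auxiliary structure. The algorithm itself is fixed. We start from $s$ and mark the piece(s) containing $s$ active; if $s$ is not a mini-boundary vertex, its mini piece is handled explicitly as described under exceptional pieces. In round $d$ we iterate over the active mini pieces. For each one we scan its active mini-boundary vertices explicitly and issue one table-swap per active micro piece. Every newly reached boundary label triggers a boundary synchronization, which marks pieces active for round $d+1$. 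The algorithm stops when no active piece remains.

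\textbf{Correctness.} We use the invariant stated above and argue by induction on $d$. The key point is that every edge is assigned to exactly one micro piece or is incident to an explicitly handled vertex. So every edge incident to $L_d$ is examined in round $d$, either by an explicit scan or by the table-swap of the micro piece owning it. The latter requires that the piece is active and that its boundary profile is up to date. Activeness holds because a piece is kept active exactly when it still contains a vertex of the current layer with unprocessed assigned edges (the flag in Claim~\ref{clm:table}), or because it was activated by synchronization. The profile is up to date because synchronizations of round $d-1$ finish before round $d$ starts. Since the ordering stored in the profile records relative boundary layers, the table lookup determines the correct local layer to advance. Minimality of the assigned layers then follows as in ordinary BFS.

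\textbf{Time.} Each table-swap in round $d$ processes a nonempty set of vertex occurrences of layer $d$ inside its micro piece. There is one subtle case: a piece can be active while the BFS makes progress purely internally, with only the local advance changing. Each such step still processes at least one vertex occurrence, so the total number of non-synchronization table-swaps is $O(n)$, since the number of vertex occurrences is $O(n)$. Synchronizations contribute $o(n)$ by Claim~\ref{clm:synchronization}. Explicit scans of mini-boundary vertices cost $O(\sum \deg)=O(n)$, and exceptional pieces cost $o(n)$. Clearing the bitvectors is charged to list insertions. It remains to rule out hidden per-round overhead over the up to $n$ rounds. This holds because each round touches only active lists, whose lengths are charged to the work above.

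\textbf{Space, and the main obstacle.} The bitvectors, lists, offsets and explicit layer/parent values are bounded by the counts in Observation~\ref{obs:multiplicities}, giving $o(n)$ bits. The main obstacle I expect is showing that offsets $\ell(w)-\ell(u)$ of micro-boundary vertices fit into $O(\log\log n)$ bits. Here $u$ is the closest reached mini-boundary vertex of $P_i$ on the root path, and the offset is bounded by the size of $P_i$, because the subpath from $u$ to $w$ after the last mini-boundary vertex stays inside $P_i$. A second obstacle is that the active lists at any single moment may hold many ids. For these I will bound the lists by the number of pieces: $O(n/\plgn)$ mini ids of $O(\log n)$ bits, and $O(n/\plglgn)$ micro ids of $O(\log\log n)$ bits. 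Both totals are $o(n)$.
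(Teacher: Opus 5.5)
Your proposal is correct and follows essentially the same route as the paper. You use induction over rounds with the active-piece invariant, charge each non-synchronization table-swap to a vertex occurrence of the current layer, bound synchronizations and exceptional pieces by $o(n)$, and bound bitvectors, active lists, and boundary layer/parent data by $o(n)$ bits. You also justify that the offset $\ell(w)-\ell(u)$ is at most $r_0$, because the tree path after the last mini-boundary vertex stays inside $P_i$; the paper only asserts this bound.
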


\subsection{Basic Operations on the BFS Tree.}
\body{First, a traversal in BFS order can be obtained by
rerunning the BFS procedure from the same root and outputting vertices when
they receive their layer number.}
\body{Next, to execute a new BFS, simply reset all micro
pieces to their initial state.}

\body{The following operations are not specific to BFS
trees. They can be supported for any spanning tree encoded in a succinct nested
division, using the same techniques as Elberfeld et al.~\cite{ElberfeldKM25}
for DFS trees. Their construction only uses that a spanning tree is stored in
the nested division, and does not rely on DFS-specific properties for these
operations. The supported operations include preorder and postorder traversal,
iteration over the children of a vertex, outputting the parent of a vertex,
testing whether an edge is a tree edge or a non-tree edge, and lowest-common
ancestor queries. Details can be found in~\cite[Sec.~3.2]{ElberfeldKM25}.}

\body{We can now show Theorem~\ref{thm:main}. By Lemma~\ref{lem:bfs}, we can compute the
BFS tree in $O(n)$ time using $o(n)$ additional bits. The operations of
Theorem~\ref{thm:main} are supported as described above. }

\body{An additional operation for plane graphs follows. We now consider
    succinct nested divisions $\cD$ of plane graphs $G=(V,E)$. Assume that a
    BFS tree $T$ has been computed and is stored in $\cD$.  Once $T$ has been
    computed, the interdigitating tree can be traversed directly in the
    embedding. Start at an arbitrary face and follow its boundary in a fixed
    direction. Whenever the traversal encounters a non-tree edge $e$, it
    crosses $e$ to the adjacent face and continues there. This produces an
    Euler tour of the interdigitating tree. If we work with a triangulated
succinct nested division, we may traverse either the faces of the triangulated
supergraph or the original faces of $G$.}

\section{Applications}\label{sec:app}

\body{Our main application is the computation of a balanced separator of size
$O(\sqrt{n})$. Two simple applications can be found at the end of this section,
bipartiteness testing, and computing a tree decomposition of width $O(d)$ for
triangulated plane graphs of diameter $d$.}

\subsection{Balanced Separators}\label{ssec:shortcycle}

\body{We use the following additional notation and known result. Let $G=(V,E)$ be a
triangulated plane graph and let $T=(V,E_T)$ be a spanning tree of $G$. For
$\{u,v\}\in E\setminus E_T$, the \emph{fundamental cycle} of $\{u,v\}$ with
respect to $T$ is the cycle formed by $\{u,v\}$ and the two tree paths from $u$
and $v$ to their lowest common ancestor in $T$. A non-negative weight
assignment to vertices, edges, or faces is \emph{$\alpha$-proper} if no single
vertex, edge, or face has weight more than $\alpha$ times the respective total
weight, for $\alpha\in(0,1)$. We extend the definition of a balanced separator
to such weight assignments: a separator is balanced with respect to the
assigned weights if each side contains at most a fixed constant fraction of the
total weight.}

\begin{lemma}[\cite{planarity,LiptonT79}]\label{lem:fundamental}
Let $G=(V,E)$ be a triangulated plane graph with a $1/4$-proper weight
assignment to its faces, and let $T=(V,E_T)$ be a spanning tree of $G$. There
exists an edge $e\in E\setminus E_T$ whose fundamental cycle with respect to
$T$ is $3/4$-balanced with respect to the face weights.
\end{lemma}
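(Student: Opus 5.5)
The plan is the classical interdigitating-tree argument of Lipton and Tarjan, phrased in terms of face weights. Let $\hat T$ be the spanning tree of the dual graph whose edges cross the non-tree edges $E\setminus E_T$. Let $W$ be the total face weight. Every non-tree edge $e=\{u,v\}$ corresponds to a dual edge $\hat e$ of $\hat T$. Removing $\hat e$ splits $\hat T$ into two subtrees. The key structural fact to establish first is that the faces of one subtree are exactly the faces lying strictly inside the fundamental cycle $C_e$, and the faces of the other subtree lie strictly outside. This follows because $C_e$ is a simple closed curve (Jordan curve theorem). No dual edge of $\hat T$ other than $\hat e$ crosses $C_e$, since the remaining edges of $C_e$ are tree edges and are not crossed by $\hat T$. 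Hence each component of $\hat T-\hat e$ stays on one side of $C_e$. Conversely, $\hat e$ itself does cross $C_e$, so the two components lie on different sides.

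Next I root $\hat T$ at an arbitrary face and assign to each dual node its subtree weight. Since every face of a triangulated plane graph has dual degree $3$, $\hat T$ has maximum degree $3$. I then walk from the root toward a heavy child, as long as some child subtree has weight greater than $W/4$ (or choose another suitable threshold). Let $x$ be the last node of this walk, so every child subtree of $x$ has weight at most $W/4$... The plain centroid argument needs more care, because the weight of $x$ itself can be up to $W/4$ and the parent side can also be large. The cleaner route is the following. Pick the dual edge $\hat e=(x,\mathrm{parent}(x))$ where $x$ is a deepest node whose subtree weight is greater than $W/4$; then its subtree weight $w$ satisfies $W/4< w$. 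If $w\le 3W/4$, we are done: the interior of $C_e$ has weight $w$ and the exterior has weight $W-w$, both at most $3W/4$. Otherwise $w>3W/4$. The children of $x$ (at most two when $x$ is not the root, at most three when it is) each have weight at most $W/4$, and $x$ itself has weight at most $W/4$ by $1/4$-properness. This bounds $w$ by $3W/4$ in the non-root case, a contradiction. In the root case, where $w=W$, I instead descend to the heaviest child and use the edge to it.

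The main obstacle is exactly this balancing step: showing that the bounded degree $3$ of $\hat T$ together with $1/4$-properness forces some dual tree edge to split the weight into parts of size at most $3W/4$. The root case needs a separate small argument. There I would choose the dual edge from the root to the child whose subtree weight is largest. If that weight is at least $W/4$, the split is balanced on both sides. If it is below $W/4$, then the three children and the root together weigh less than $W$, which is a contradiction. The Jordan-curve correspondence between sides of $C_e$ and components of $\hat T-\hat e$ is standard and can be cited from~\cite{planarity}.
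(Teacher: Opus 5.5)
Your proposal is correct and follows the same interdigitating-tree argument that the paper cites and outlines after the statement. That argument uses the dual tree $\hat T$ of maximum degree three, finds a balanced cut edge $\hat e$, and identifies the two components of $\hat T-\hat e$ with the faces inside and outside the fundamental cycle; your Jordan-curve justification and the deepest node of subtree weight above $W/4$ (with the separate root case) fill in details the paper only cites. The only difference is cosmetic: the proof of Lemma~\ref{lem:cyclesep} finds the cut edge via a weighted centroid instead. In a write-up you should drop the abandoned ``walk toward a heavy child'' false start.
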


\body{The separator of Lemma~\ref{lem:fundamental} can be found by a standard
interdigitating tree procedure~\cite{planarity,LiptonT79}, outlined next. Let
$\hat T$ be the interdigitating tree induced by $T$, i.e., the dual spanning
tree whose edges correspond to $E\setminus E_T$. Assign each vertex of $\hat T$
the weight of the corresponding face. Since $G$ is triangulated, $\hat T$ has
maximum degree at most three. One finds an edge $\hat e$ of $\hat T$ such that
each component $X$ of $\hat T-\hat e$ has weight at most $3/4$ of the total
weight, referred to as a \emph{balanced cut edge}. The edge $\hat e$ can be
found by a bottom-up tree traversal in $O(n)$ time and $O(n\log n)$ bits; we
describe the adaptation to our setting below. The primal edge $e$ crossed by
$\hat e$ defines the fundamental cycle $C$ of Lemma~\ref{lem:fundamental}. The two
components of $\hat T-\hat e$ correspond to the faces strictly inside and
strictly outside $C$.}

\body{If $T$ is a BFS tree, then the fundamental cycle
of $e$ is a \emph{shortest-path cycle}: it consists of the edge $e$ together
with two shortest paths in $T$. To obtain a separator balanced with respect to
vertices, we assign each vertex weight $1$ to one incident face, chosen by a
fixed canonical rule, and weight $0$ to all other incident faces. Then every
vertex is counted exactly once, and every face has weight at most three. 
Thus, for all sufficiently large $n$, this is a $1/4$-proper weight assignment.
A separator balanced with respect to these
face weights is balanced with respect to vertices.}

\subsubsection{Finding a shortest-path cycle separator.}

\body{For the remainder of this section, let $G=(V,E)$ be a plane graph encoded
    by a triangulated succinct nested division $\cD$, let $T=(V,E_T)$ be a BFS
    tree of $G$ encoded in $\cD$, and let $\hat T$ be the interdigitating tree
    induced by $T$. Recall that a triangulated succinct nested division still
    gives access to the original non-triangulated graph $G$
    (\ref{par:triangulated}); throughout this section, we assume the BFS tree
    is computed in $G$, while auxiliary triangulation edges are treated as
non-tree edges of infinite weight.} 

\body{We sketch how to find a $3/4$-balanced shortest-path cycle separator
    directly from $\cD$ in $o(n)$ time and $o(n)$ bits, after $T$ has been
    computed. The idea is to construct a compressed version of the weighted
    interdigitating tree, called the skeleton tree. A balanced cut edge of the
skeleton either directly corresponds to a balanced cut edge of the
interdigitating tree, or identifies a unique mini piece in which such a
balanced cut can be found.}

\definition{ The \emph{skeleton tree $\skel$ of $\hat T$} is defined as
    follows. Delete all mini-interface edges from $\hat T$. This leaves a
    forest. We contract every connected component $X$ of the forest into one
    \emph{skeleton vertex $x$}, whose weight is the sum of the weights of the
    vertices (i.e., faces in the primal) in $X$. Each deleted mini-interface
edge becomes an edge between the two skeleton vertices corresponding to the two
components on its sides. See \Cref{fig:skeleton} for an example. }

\begin{figure}[ht!]
  \centering
  \begin{tikzpicture}[
    every node/.style={font=\small},
    dualnode/.style={
        regular polygon,
        regular polygon sides=3,
        shape border rotate=90,
        draw,
        fill=white,
        inner sep=1pt,
        minimum size=15pt
    },
    skelnode/.style={
        circle,
        draw,
        fill=gray!10,
        inner sep=1.8pt,
        minimum size=18pt
    },
    treeedge/.style={line width=.9pt},
    interface/.style={red, line width=1.2pt},
    pieceA/.style={draw=green!50!black, fill=green!18, rounded corners, inner sep=5pt},
    pieceB/.style={draw=orange!70!black, fill=orange!18, rounded corners, inner sep=5pt},
    pieceC/.style={draw=blue!60!black, fill=blue!18, rounded corners, inner sep=5pt},
    pieceD/.style={draw=purple!60!black, fill=purple!18, rounded corners, inner sep=5pt},
    compoutline/.style={draw=black!60, dashed, rounded corners, inner sep=3pt}
]

\begin{scope}[shift={(11.2,0)}]
    \node[dualnode] (aa1) at (0.0,1.3) {1};
    \node[dualnode] (aa2) at (1.2,2.2) {0};
    \node[dualnode] (aa3) at (1.5,1.0) {2};

    \node[dualnode] (bb1) at (3.3,1.7) {1};
    \node[dualnode] (bb2) at (4.4,2.2) {0};
    \node[dualnode] (bb3) at (4.4,1.2) {2};
    \node[dualnode] (bb4) at (5.4,1.6) {1};

    \node[dualnode] (ee1) at (3.1,3.2) {1};
    \node[dualnode] (ee2) at (4.2,3.5) {1};

    \node[dualnode] (ff1) at (5.3,3.3) {2};

    \node[dualnode] (cc1) at (2.7,-0.1) {2};
    \node[dualnode] (cc2) at (3.9,0.2) {0};

    \node[dualnode] (dd1) at (7.3,1.2) {3};
    \node[dualnode] (dd2) at (8.3,1.8) {0};
    \node[dualnode] (dd3) at (8.5,1.0) {1};
    \node[dualnode] (dd4) at (8.1,0.2) {0};

    \begin{pgfonlayer}{background}
        \node[pieceA, fit=(aa1)(aa2)(aa3)] {};
        \node[pieceB, fit=(bb1)(bb2)(bb3)(bb4)(ee1)(ee2)(ff1)] {};
        \node[pieceC, fit=(cc1)(cc2)] {};
        \node[pieceD, fit=(dd1)(dd2)(dd3)(dd4)] {};
    \end{pgfonlayer}

    \draw[treeedge] (aa1)--(aa2);
    \draw[treeedge] (aa1)--(aa3);

    \draw[treeedge] (bb1)--(bb2);
    \draw[treeedge] (bb1)--(bb3);
    \draw[treeedge] (bb3)--(bb4);

    \draw[treeedge] (ee1)--(ee2);

    \draw[treeedge] (cc1)--(cc2);

    \draw[treeedge] (dd1)--(dd2);
    \draw[treeedge] (dd1)--(dd3);
    \draw[treeedge] (dd1)--(dd4);

    \draw[interface, dashed, opacity=.45] (aa3)--(bb1);
    \draw[interface, dashed, opacity=.45] (aa2)--(ee1);
    \draw[interface, dashed, opacity=.45] (aa3)--(cc1);
    \draw[interface, dashed, opacity=.45] (bb4)--(dd1);
    \draw[interface, dashed, opacity=.45] (ff1)--(dd2);

    \node[red] at ($(aa3)!0.5!(bb1)$) {$\times$};
    \node[red] at ($(aa2)!0.5!(ee1)$) {$\times$};
    \node[red] at ($(aa3)!0.5!(cc1)$) {$\times$};
    \node[red] at ($(bb4)!0.5!(dd1)$) {$\times$};
    \node[red] at ($(ff1)!0.5!(dd2)$) {$\times$};

    \node[compoutline, fit=(aa1)(aa2)(aa3)] {};
    \node[compoutline, fit=(bb1)(bb2)(bb3)(bb4)] {};
    \node[compoutline, fit=(ee1)(ee2)] {};
    \node[compoutline, fit=(ff1)] {};
    \node[compoutline, fit=(cc1)(cc2)] {};
    \node[compoutline, fit=(dd1)(dd2)(dd3)(dd4)] {};

\end{scope}

\begin{scope}[shift={(22.7,0)}]

    \node[skelnode, fill=green!20]  (s1) at (0.0,1.2) {$3$};
    \node[skelnode, fill=orange!20] (s2) at (2.1,1.7) {$4$};
    \node[skelnode, fill=orange!20] (s3) at (1.8,3.0) {$2$};
    \node[skelnode, fill=blue!20]   (s4) at (1.9,0.1) {$2$};
    \node[skelnode, fill=orange!20] (s5) at (3.5,2.8) {$2$};
    \node[skelnode, fill=purple!20] (s6) at (4.6,1.3) {$4$};

    \draw[treeedge] (s1)--(s2);
    \draw[treeedge] (s1)--(s3);
    \draw[treeedge] (s1)--(s4);
    \draw[treeedge] (s2)--(s6);
    \draw[treeedge] (s5)--(s6);

\end{scope}

\end{tikzpicture}
  \caption{
  On the left, the interdigitating tree $\hat T$. Nodes represent
  faces of the triangulated plane graph, labeled with their weights.
  Red edges are mini-interface edges. 
   The right side shows the
skeleton tree $\skel$ of $\hat T$.}
  \label{fig:skeleton}
\end{figure}

\body{Since $\hat T$ is a tree, deleting $k$ mini-interface edges creates $k+1$
connected components $X$. By Lemma~\ref{lem:interface}, every mini-interface edge
crosses a primal edge whose endpoints are mini-boundary vertices. Thus, the
number of mini-interface edges is bounded by the number of primal edges between
mini-boundary vertices. Since planar graphs are sparse, 
the number of such primal edges is linear in the
number of mini-boundary vertices, and therefore we obtain $k=O(n/\plgn)$
connected components $X$ by deleting mini-interface edges. Each connected
component $X$ represents only faces inside some mini piece, and thus its weight
is bounded by $O(\plgn)$ and the total sum of weights is $n$.}

\begin{claim}\label{clm:skeletonsize}
$\skel$ has $O(n/\plgn)$ skeleton vertices, each of weight $O(\plgn)$, and
total weight $n$.
\end{claim}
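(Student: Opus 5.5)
We prove the three parts of Claim~\ref{clm:skeletonsize} separately, all by counting on the interdigitating tree $\hat T$. Let $k$ be the number of mini-interface edges of $\hat T$. Since $\hat T$ is a tree, deleting these $k$ edges leaves exactly $k+1$ components, and every skeleton vertex is one such component. So the first part reduces to showing $k=O(n/\plgn)$. By Lemma~\ref{lem:interface}, each mini-interface edge crosses a primal edge of the triangulated graph whose two endpoints are mini-boundary vertices. A dual edge crosses exactly one primal edge, so this gives an injective map from mini-interface edges to such primal edges. These primal edges form a subgraph of the planar graph $G^\triangle$ on the mini-boundary vertices, and by sparsity of planar graphs their number is linear in the number of mini-boundary vertices. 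By Observation~\ref{obs:multiplicities} that number is $O(n/\plgn)$. Hence $k=O(n/\plgn)$, and so is $k+1$.

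For the per-vertex weight bound, fix a component $X$ of $\hat T$ minus the mini-interface edges. Every edge inside $X$ joins two faces assigned to the same mini piece, so all faces of $X$ lie in one mini piece $P_i$ by connectivity. Each face assigned to $P_i$ has at least one edge in $P_i$, and each edge of $P_i$ bounds at most two faces. Since $P_i$ is planar with at most $r_0$ vertices, it has $O(r_0)$ edges, so $X$ contains $O(r_0)=O(\plgn)$ faces. Each face carries weight at most three under the canonical vertex-to-face rule, so the weight of $X$ is $O(\plgn)$. One subtlety is that $P_i$ here means the mini piece as a subgraph of $G^\triangle$, including auxiliary edges. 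Lemma~\ref{lem:triangulate} only adds edges within micro pieces or between boundary vertices, so planarity and the $O(r_0)$ vertex bound still hold.

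For the total weight, each vertex contributes weight $1$ to exactly one face and $0$ elsewhere, so the face weights sum to $n$. Contraction only regroups faces, and every face lies in exactly one component because $\hat T$ spans the dual. The skeleton weights therefore also sum to $n$.

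The main obstacle is the first step: making sure every mini-interface edge is charged to a boundary-to-boundary primal edge without overcounting. This is exactly what Lemma~\ref{lem:interface} provides. The remaining point to check is that the auxiliary triangulation edges still form a planar graph on the mini-boundary vertices, which holds by Claim~\ref{cl:ttwo}.
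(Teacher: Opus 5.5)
Your proposal follows the paper's own argument essentially step for step: $k+1$ components after deleting the $k$ mini-interface edges of $\hat T$, an injection via Lemma~\ref{lem:interface} into primal edges with two mini-boundary endpoints, and a sparsity count combined with Observation~\ref{obs:multiplicities}. For the other two parts you use the same ideas as the paper: each component is confined to one mini piece, which gives the weight bound, and the canonical vertex-to-face rule gives total weight $n$. Those two parts are fine at the paper's level of detail. The counting step, however, has a genuine gap, and the paper's sketch has the same gap.

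The step that fails is ``these primal edges form a subgraph of the planar graph $G^\triangle$ on the mini-boundary vertices, so their number is linear in the number of mini-boundary vertices.'' Sparsity holds for simple planar graphs, but $G^\triangle$ is a multigraph. Claim~\ref{cl:tone} deliberately inserts a new auxiliary edge inside a face even when an edge with the same endpoints already exists, and planarity does not bound the number of such parallel edges.

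Here is a concrete instance. Take $G=K_{2,D}$ with poles $v,w$ and middle vertices $a_1,\dots,a_D$ in cyclic order. Let mini piece $P_k$ consist of both edges at every $a_i$ with $i\equiv k \pmod N$, for a suitable $N=\Theta(n/r_0)$ with $N\ge 2$. This is an $r_0$-division in the paper's sense, and only $v$ and $w$ are mini-boundary vertices. Consecutive middle vertices lie in different pieces, so every face $v a_i w a_{i+1}$ receives an auxiliary edge $vw$; the only alternative, $a_ia_{i+1}$, would create new boundary vertices. The triangles $v a_i w$ and $w a_{i+1} v$ on the two sides of that edge are assigned to different mini pieces. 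Since auxiliary edges are non-tree edges, $\Omega(n)$ of these mini-interface edges lie in $\hat T$. Hence $\skel$ has $\Omega(n)$ vertices, although there are only two mini-boundary vertices.

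Citing Claim~\ref{cl:ttwo} does not help. Its statement only gives an $o(n)$-bit bound, and its proof obtains the edge count by the same sparsity argument, which fails on this example. So the first part of the claim cannot be proved from the paper's definitions alone. It needs either a modified triangulation or an extra hypothesis limiting how often a face of $G$ alternates between mini pieces (something like a few-holes condition on the division). Neither your proposal nor the paper supplies one.
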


\body{We sketch how to compute $\skel$. We identify all mini-interface edges $E_I$,
i.e., the edges that separate $\hat T$ into components that lie strictly within
mini pieces. We then scan all mini pieces $P_i$ and compute the weight of the
components $X$ within $P_i$ together with the mini-interface edges that connect
them. Each such $X$ becomes a skeleton vertex. To find the components $X$ we
scan all micro pieces $P_{i,j}$ and use table lookup operations to speed up
the process. In this way, we can construct $\skel$ in $O(n/\plglgn)$ time using
$o(n)$ bits. 

\begin{lemma}
\label{lem:skeleton}
We can construct the skeleton tree $\skel$ of the interdigitating tree $\hat T$
in $O(n/\plglgn)$ time using $o(n)$ bits.
\end{lemma}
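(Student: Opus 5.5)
Our plan is to build $\skel$ bottom-up from the two levels of the nested division, touching explicitly only boundary structures and using table lookups for everything inside micro pieces. First we enumerate the mini-interface edges $E_I$. By Lemma~\ref{lem:interface}, every mini-interface edge crosses a primal edge whose endpoints are both mini-boundary vertices. We therefore iterate over all mini-boundary vertices and their mini labels, and scan the incident darts of $G^\triangle$ whose other endpoint is also a mini-boundary vertex. For each such primal edge $e$ we test whether $e\notin E_T$; this is the tree-edge test of Theorem~\ref{thm:main}. We also test whether the two incident faces are assigned to different mini pieces; this is evaluated from the pieces of the three edges of each triangle, which is a constant number of embedding queries. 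Because planar graphs are sparse, this step costs $O(n/\plgn)$ time. Every found edge is stored with the ids of its two faces, using $O(\log n)$ bits per edge and $o(n)$ bits in total.

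Next, for each mini piece $P_i$ we compute the components $X$ of $\hat T$ restricted to faces assigned to $P_i$, together with their weights. We apply the same decomposition one level down. Micro-interface edges inside $P_i$ cross primal edges between micro-boundary vertices of $P_i$ (again by Lemma~\ref{lem:interface}), so they can be enumerated in time proportional to the number of micro-boundary occurrences. Inside each micro piece $P_{i,j}$, a table-swap query (\ref{def:ts}) returns a local description of the restricted dual forest. The lookup state already encodes the local BFS tree, hence the tree/non-tree status of local edges, and the triangulation from Claim~\ref{cl:tthree}. The query returns the number of local components, the weight of each component (vertex weights are assigned to faces by a canonical rule, which is locally computable), and, for each micro-interface edge incident to $P_{i,j}$, the local component it touches. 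Since components are anchored at micro-interface edges, the number of components containing a boundary-adjacent face is $O(r_1^{1/2+\eps})$, and components not touching any micro-interface edge are complete skeleton pieces on their own, for which only the weight needs to be summed. We then run a union--find or DFS over the small graph whose nodes are local components and whose edges are micro-interface edges that are non-tree dual edges. Its size is $O(n/\plglgn)$ overall, and the same applies one level up for combining mini pieces. This yields each $X$ with its summed weight, and attaches each edge of $E_I$ to its two skeleton vertices.

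The time bound is $O(n/\plglgn)$: we use one table query per micro piece, and explicit work proportional to the boundary occurrences, which is controlled by Observation~\ref{obs:multiplicities}. The space is $o(n)$ bits. By Claim~\ref{clm:skeletonsize} the skeleton has $O(n/\plgn)$ vertices and edges, each stored with $O(\log n)$ bits. The per-mini-piece auxiliary union--find structures use local labels of $O(\log\log n)$ bits over $O(n/\plglgn)$ items, and can be reused sequentially piece by piece.

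We expect the main obstacle to be the interface between the table lookup and the explicitly handled data. Local dual components must be identified consistently across micro pieces, even though faces with edges in several micro pieces are assigned by the max-index rule, and exceptional micro pieces (those containing mini labels of mini-boundary vertices, or the start piece) carry their tree information outside the table. We would first settle this by naming each face touching a micro-boundary edge through the micro-boundary primal edge it contains. This is possible because such faces are exactly those incident to micro-interface edges, by Lemma~\ref{lem:interface}. We would then handle exceptional pieces by an explicit scan, which costs $o(n)$ total.
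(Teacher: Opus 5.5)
Your route is the paper's: skeleton edges are the mini-interface edges, which are few by Lemma~\ref{lem:interface}. Skeleton vertices are the components of $\hat T$ inside each mini piece, obtained by merging per-micro-piece table summaries along micro-interface edges of $\hat T$. Exceptional micro pieces are scanned explicitly, and workspace is reused per mini piece.

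\textbf{Where it fails: enumerating $E_I$.} Sparsity bounds the \emph{number} of primal edges between mini-boundary vertices by $O(n/\plgn)$. It does not bound the cost of scanning the incident darts of all mini-boundary vertices, which is the sum of their degrees. Take a wheel. The hub has degree $n-1$, and a mini piece has at most $r_0$ vertices, so the hub's edges are spread over many mini pieces and the hub is a mini-boundary vertex. Scanning its darts alone takes $\Theta(n)$ time, exceeding $O(n/\plglgn)$. The same example shows that ``exceptional'' must mean what it means in Section~\ref{sec:bfs}: micro pieces containing a mini label of a mini-boundary vertex that is \emph{not} a micro-boundary vertex. There are only $O(n/\plgn)$ of these. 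Micro pieces that merely contain some mini-boundary vertex can be all micro pieces, and scanning them explicitly again costs $\Theta(n)$.

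\textbf{How to repair it.} Find the candidate edges at the micro level without touching all darts. Let the table lookup you already perform for each non-exceptional micro piece also report its local edges joining two micro-boundary vertices. These form a planar graph on the micro-boundary vertices of the piece, so there are $O(r_1^{1/2+\eps})$ per piece and $O(n/\plglgn)$ in total. In a non-exceptional piece, every mini label of a mini-boundary vertex occurring there is micro-boundary, so every edge there between two mini-boundary vertices is reported. Edges in exceptional pieces are found by their explicit scan, and auxiliary boundary-to-boundary edges are read from their explicit lists. Filtering these candidates with your tree-edge test and face-assignment test yields $E_I$ within the claimed time. The paper's proof only counts these edges and leaves their enumeration implicit in the same micro-level pass, so this is a detail you must supply rather than a different argument.

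\textbf{A minor point.} Lemma~\ref{lem:interface} only shows that faces incident to micro-interface edges contain an edge between two micro-boundary vertices, not the converse. Your ``exactly'' therefore overstates it, though your face-naming scheme only needs the proven direction.
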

\begin{proof}
We construct the skeleton tree explicitly. In the following we write $\hat T_i$
for the forest induced by the components of $\hat T$ inside a mini piece $P_i$
after deleting mini-interface edges from $\hat T$. We define $\hat T_{i,j}$
analogously for a micro piece $P_{i,j}$. First, we add one skeleton edge for
every mini-interface edge of $\hat T$. By Lemma~\ref{lem:interface}, these
edges cross primal edges whose endpoints are mini-boundary vertices. Thus,
there are only $O(n/\plgn)$ such edges. It remains to add the skeleton
vertices. For every mini piece $P_i$, consider the forest $\hat T_i$. Every
connected component of $\hat T_i$ becomes one skeleton vertex. The weight of a
skeleton vertex is the total weight of the faces in the component, and a
skeleton vertex is incident exactly to the mini-interface edges that have one
side in the component. Once the component is known, these incident
mini-interface edges can be found in time linear in the degree of the skeleton
vertex. The connected components of $\hat T_i$ are computed bottom-up using the
micro pieces of $P_i$. 

For a micro piece $P_{i,j}$ that contains no
mini-boundary vertex, the lookup table returns a summary of the forest $\hat
T_{i,j}$: its connected components, the micro-interface edges incident to each
component, and the weight of each component. The total size of all returned
summaries is bounded by the number of micro pieces and micro-interface
incidences, and is therefore $O(n/\plglgn)$. Micro pieces that contain
mini-boundary vertices are handled explicitly without the table lookup,
contributing negligible $O(n/\plgn)$ time and space in total. For a fixed mini
piece $P_i$, we combine the summaries of its micro pieces. Initially, each
component returned by a micro piece is one item with its stored weight. For
every micro-interface edge $\hat e$ inside $P_i$, we merge the two components
corresponding to the two sides of $\hat e$. After all such merges, the items
are exactly the connected components of $\hat T_i$, and their weights are the
sums of the weights of the merged components of $\hat T_{i,j}$. We then add one
skeleton vertex for each such component and connect it to all incident
mini-interface edges. Since the total number of mini-interface edges is
$O(n/\plgn)$, the skeleton has $O(n/\plgn)$ vertices and edges. The
construction uses only table lookups, the processing of interface edges, and
explicit handling of the exceptional micro pieces. To summarize, there are
$O(n/\plglgn)$ micro pieces in total, and each of them is processed a constant
number of times. The number of skeleton vertices is bounded by the number of
mini-interface edges plus one, and the skeleton edges correspond to
mini-interface edges. The workspace used while processing one mini piece can be
reused for the next mini piece. Thus, $\skel$ can be constructed in
$O(n/\plglgn)$ time using $o(n)$ bits.
\end{proof}

\body{
We next show how to use the skeleton $\skel$ of the interdigitating tree
$\hat T$ to find a balanced cut edge of $\hat T$. The primal edge crossed by
this cut edge then yields the desired fundamental-cycle separator.
}

\begin{lemma}
\label{lem:cyclesep}
Let $\cD$ be a triangulated succinct nested division of a plane graph
$G=(V,E)$. Let a BFS tree $T=(V,E_T)$ be encoded in $\cD$. We can find
an edge $e\in E\setminus E_T$ whose fundamental cycle is a $3/4$-balanced
shortest-path cycle separator in $o(n)$ time and $o(n)$ bits.
\end{lemma}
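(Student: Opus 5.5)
We first construct $\skel$ by Lemma~\ref{lem:skeleton} in $O(n/\plglgn)$ time and $o(n)$ bits, using the vertex-to-face weights from the canonical rule above. By Claim~\ref{clm:skeletonsize}, $\skel$ has $O(n/\plgn)$ vertices, so it fits explicitly in $O(n/\plgn\cdot\log n)=o(n)$ bits. We root $\skel$ arbitrarily and compute subtree weights bottom-up in time linear in its size. We then apply the standard search for a centroid-like location. Starting at the root, we repeatedly move to a child whose subtree weight exceeds $3n/4$, if one exists. The walk stops at a skeleton vertex $x$ whose own subtree has weight above $3n/4$ while every child subtree has weight at most $3n/4$. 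Two cases arise. If some skeleton edge incident to $x$ already gives two sides each of weight at most $3n/4$, then that edge is a mini-interface edge of $\hat T$, and removing it splits $\hat T$ into exactly the same two face sets. Its crossed primal edge is then the desired non-tree edge $e$. Otherwise, every skeleton edge at $x$ leaves a heavy side of weight above $3n/4$ containing $x$.

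In the second case we expand $x$. We compute the component $X$ of $\hat T_i$ inside its mini piece $P_i$, which has $O(\plgn)$ faces. Each mini-interface edge incident to $X$ is replaced by a pendant weight equal to the total weight of the skeleton subtree hanging off that edge; we know these weights from the subtree sums. This gives an explicit weighted tree $X'$ of size $O(\plgn)$ with the same total weight $n$. Every edge of $X'$ is either an edge of $\hat T$ inside $X$ or an attachment edge, and removing it induces the same bipartition of the faces as removing the corresponding edge of $\hat T$. Since each pendant weighs at most $3n/4$ and is a leaf of $X'$, a balanced cut edge can only fail to exist if some single face is too heavy, which the $1/4$-proper weight assignment rules out. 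Note that $\hat T$ has maximum degree three because $G^\triangle$ is triangulated. The standard argument behind Lemma~\ref{lem:fundamental} therefore yields a $3/4$-balanced cut edge $\hat e$ of $X'$, found in $O(\plgn)$ time and $O(\plgn\log n)=o(n)$ bits. If $\hat e$ is an attachment edge, the first case applies after all. Otherwise its primal edge $e$ is a non-tree edge whose fundamental cycle is $3/4$-balanced. Because $T$ is a BFS tree, this cycle is a shortest-path cycle.

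The main obstacle is making the local expansion of $X$ actually explicit in $o(n)$ time. We must enumerate the faces of $X$ with their weights and adjacencies through $\cD$, including triangulation edges from Lemma~\ref{lem:triangulate}, which count as non-tree edges. We must also map each attachment back to its skeleton edge. This can be done by walking the dual tree inside $P_i$ via dart queries, and by testing tree edges in $O(1)$ time as in Theorem~\ref{thm:main}. The cost is $O(r_0)=O(\plgn)$, which is $o(n)$. A second subtlety is that faces on the boundary of a mini piece are assigned by the max-index rule. Weights must therefore be attributed consistently with the skeleton construction, and we use the same face assignment in both places. Summing the parts gives $O(n/\plglgn)+O(\plgn)=o(n)$ time and $o(n)$ bits.
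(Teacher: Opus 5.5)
Your proposal is correct and follows essentially the same route as the paper: build $\skel$ explicitly and locate a pivot skeleton vertex. Then either an incident skeleton (mini-interface) edge is already balanced, or you expand the pivot into its component inside a single mini piece, with the rest of $\skel$ contracted to pendant leaves, and the balanced cut edge must lie inside that component. The only differences are cosmetic: you find the pivot by descending along subtrees of weight above $3n/4$ rather than taking a weighted centroid of $\skel$, and you justify existence in the expanded tree by a case split on pendant weights rather than a second centroid computation. Both variants are valid.
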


\begin{proof}
We explain how $\skel$ is used to find a balanced cut edge $\hat e$ of the
interdigitating tree $\hat T$. The edge $\hat e$ crosses a primal edge $e$,
whose fundamental cycle is the desired separator of Lemma~\ref{lem:fundamental}.
Let $W$ be the total weight of $\skel$. We compute a \emph{weighted centroid}
$z$ of $\skel$, i.e., a vertex such that every connected component of
$\skel-z$ has weight at most $W/2$. This can be done by a simple bottom-up tree
traversal in linear time~\cite{Harary69,HedetniemiCH81}. If one of these
components $X$ has weight at least $W/4$, then the edge incident to $z$ and $X$
already corresponds to an interdigitating tree edge whose two sides have
weights between $W/4$ and $3W/4$.

It remains to consider the case where every component of $\skel-z$ has weight
less than $W/4$. In this case, no balanced cut edge of $\hat T$ can lie inside
any component represented by $\skel-z$: every such edge separates off a part of
weight strictly less than $W/4$. Since a balanced cut edge exists in $\hat T$,
it must lie inside the connected component $X_z$ of $\hat T$ represented by
$z$. We therefore form a local tree $\hat T_z$ as follows. We contract each
component of $\skel-z$ into a single weighted vertex, replace $z$ by the
interdigitating tree component $X_z$ it represents, and attach each contracted
component to the corresponding interface edge of $X_z$. The component $X_z$
lies inside a single mini piece $P_i$, so $\hat T_z$ can be constructed by
scanning $P_i$.

We now compute a weighted centroid $z'$ of $\hat T_z$. The tree $\hat T_z$ has
maximum degree at most three. In detail, it is obtained from the
interdigitating tree $\hat T$, which has maximum degree at most three, by
contracting whole connected components of $\hat T-X_z$ to single vertices, and
these contractions do not increase the degree of any vertex. Moreover, every
contracted vertex, now a leaf in $\hat T_z$, has weight less than $W/4$ by the
case distinction. Each vertex of $X_z$ has weight $O(1)$, and hence less than
$W/4$ for sufficiently large $n$; constant-size cases are trivial. Hence, some
component $X$ of $\hat T_z-z'$ has weight at least $W/4$; otherwise, since
vertices of $\hat T_z$ have degree at most three and $z'$ has weight less than
$W/4$, the total weight of $\hat T_z$ would be less than $W$. Since $z'$ is a
centroid, $X$ has weight at most $W/2$. Thus the edge $\hat e_z$ incident to
$z'$ and $X$ separates $\hat T_z$ into two parts of weights between $W/4$ and
$3W/4$. The edge $\hat e_z$ cannot be incident to one of the vertices
representing the components of $\skel-z$ adjacent to $z$, because each such
vertex represents a component of weight less than $W/4$.

Therefore $\hat e_z$ lies inside $X_z$ and is an edge of the original
interdigitating tree $\hat T$. The primal edge crossed by $\hat e_z$ gives the
cycle of Lemma~\ref{lem:fundamental}. Since $X_z$ lies in a single mini piece, all
steps after constructing $\skel$ take $o(n)$ time and use $o(n)$ bits. Hence we
can find the desired edge $e\in E\setminus E_T$ within the claimed bounds.
\end{proof}

\body{We now show how to obtain a balanced separator of size $O(\sqrt n)$,
    i.e., Theorem~\ref{thm:separator}. The construction follows a standard variant of
    the planar separator theorem~\cite{planarity,LiptonT79}. We give a short
    sketch, omitting some edge cases. Recall that $\ell(u)$ is the layer number
    of a vertex $u$ in the BFS tree, and let $L_i=\{u\in V\mid \ell(u)=i\}$. We
    first compute a BFS tree and then a $3/4$-balanced fundamental-cycle
    separator $C$. If $|C| = O(\sqrt n)$ vertices, we are done. Otherwise,
    there exist two BFS layer numbers $i^-$ and $i^+$ such that $S =
    L_{i^-}\cup L_{i^+}\cup \{u\in C\mid i^- \leq \ell(u)\leq i^+\}$ is a
    $3/4$-balanced with $|S|=O(\sqrt n)$. The second case can be implemented by
scanning the BFS layers to find $i^-$ and $i^+$, followed by a second scan,
together with a traversal of $C$, to output the vertex sets forming $S$.}

\begin{proof}
We use the standard construction of the planar separator theorem; we only
describe how to carry it out with our data structure. For correctness,
see~\cite{planarity,LiptonT79}. We present a slightly simplified variant,
namely a separator that is $3/4$-balanced instead of the classical
$2/3$-variant. Recall that $\ell(u)$ denotes the BFS layer of a vertex $u$, and
let $L_i=\{u\in V\mid \ell(u)=i\}$. First, compute a BFS tree $T$ in $O(n)$
time using $o(n)$ bits (Theorem~\ref{thm:main}). Next, compute a $3/4$-balanced
fundamental-cycle separator via Lemma~\ref{lem:cyclesep}. Let this cycle be $C$.
Since $T$ is a BFS tree, $C$ consists of one non-tree edge together with two
shortest paths in $T$. Let $i_{\min}$ and $i_{\max}$ be the minimum and maximum
layer numbers of vertices of $C$. If $C$ contains $O(\sqrt n)$ vertices, we are
done and output the already $3/4$-balanced separator $C$. Otherwise, we ``shortcut'' $C$ by two suitable BFS layers.
For an integer $i$, let
\[
  W_{\leq i} = |\{u\in V\mid \ell(u)\leq i\}|
  \quad\text{and}\quad
  W_{> i} = |\{u\in V\mid \ell(u)> i\}|.
\]
Let $i^-$ be the largest integer such that
\[
  i_{\min}<i^-<i_{\max},\qquad |L_{i^-}|\leq \sqrt n,\qquad
  W_{\leq i^-}\leq 3/4n,
\]
if such a layer exists, and set $i^-=i_{\min}$ otherwise. Analogously, we
choose $i^+$ as the smallest integer satisfying
\[
  i^-<i^+<i_{\max},\qquad |L_{i^+}|\leq \sqrt n,\qquad
  W_{>i^+}\leq 3/4n,
\]
if such a layer exists, and set $i^+=i_{\max}$ otherwise. The separator we
output is
\[
  S =
  S^- \cup S^+ \cup
  \{u\in C\mid i^- \leq \ell(u)\leq i^+\},
\]
where $S^-=L_{i^-}$ if $i^->i_{\min}$ and $S^-=\emptyset$ otherwise, and
$S^+=L_{i^+}$ if $i^+<i_{\max}$ and $S^+=\emptyset$ otherwise. It is known that
such a separator $S$ always exists (as long as $C$ is not already our desired
separator) and that $|S|=O(\sqrt{n})$~\cite{planarity,LiptonT79}.

It remains to explain the implementation. The cycle $C$ is represented by the
non-tree edge $e$ returned by Lemma~\ref{lem:cyclesep}. We compute $i_{\min}$,
$i_{\max}$, and the size of $C$ by walking the two tree paths from the
endpoints of $e$ to their lowest common ancestor. This can be done in
$O(n)$ time using $o(n)$ bits. If $C$ is small enough, we output $C$ directly.
Otherwise, we find $i^-$ and $i^+$ by outputting all the vertices in BFS order.
In one pass, we maintain the current layer size and the sum $W_{\leq i}$, and
record the last layer satisfying the conditions for $i^-$. In a second pass,
after $i^-$ is known, we again maintain the sum, compute $W_{>i}=n-W_{\leq i}$,
and record the first layer satisfying the conditions for $i^+$. Both passes
take $O(n)$ time and use $o(n)$ bits. The time and space for maintaining the
running sums is negligible. Finally, we output $S$. We iterate over all
vertices in BFS order and output the vertices in the layers represented by
$S^-$ and $S^+$ (or nothing, if the respective sets are empty). We then walk
the two tree paths forming $C$ and output the vertices whose layer numbers are
between $i^-$ and $i^+$. The total running time is $O(n)$ and the additional
working space is $o(n)$.
\end{proof}

\subsection{Testing for Bipartiteness}

\body{A graph $G=(V,E)$ is bipartite exactly if there exists a proper vertex
    coloring of $G$ using $2$ colors, where proper means that no two adjacent
    vertices receive the same color. We use the well-known folklore algorithm
    based on BFS layers. Let $T$ be a BFS tree rooted at some vertex $s$, and
    let $\ell(v)$ denote the BFS layer of a vertex $v$. Clearly, every tree is
    bipartite; in particular, the coloring $c(v)\coloneqq\ell(v)\bmod 2$ is a
    proper coloring of $T$. Hence, it remains only to check the non-tree edges.
    If a non-tree edge $\{u,v\}$ satisfies $\ell(u)\equiv \ell(v)\pmod 2$, then
$u$ and $v$ receive the same color, and $G$ is not $2$-colorable. Otherwise,
the parity of the BFS layers gives a valid $2$-coloring.}

\begin{lemma}
\label{lem:bipartite}
Let $\cD$ be a succinct nested division of a planar graph $G$, and
suppose that a BFS tree is encoded in $\cD$. We can test whether $G$ is
bipartite in $o(n)$ time.
\end{lemma}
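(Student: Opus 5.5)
We reduce bipartiteness to one parity check per non-tree edge. By the folklore argument above, $G$ is bipartite iff every non-tree edge $\{u,v\}$ satisfies $\ell(u)\not\equiv\ell(v)\pmod 2$. We certify this condition separately on the three levels of the nested division, mirroring how layer information is stored in $\cD$. Every edge is assigned to exactly one mini piece, and inside it to exactly one micro piece. So it suffices to check, for each micro piece $P_{i,j}$, all edges assigned to it, together with the edges of the exceptional pieces, which are stored explicitly.

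\textbf{Micro level via table lookup.} For a non-exceptional micro piece $P_{i,j}$, the partial BFS state in its table index determines all local layer offsets of internal vertices. It does so relative to the ordering of the micro-boundary layers, including ties. This ordering alone does not give the parity of two boundary vertices whose layers differ, so we handle two kinds of edges differently.
\begin{itemize}
\item \emph{Edges with at least one internal endpoint.} We precompute in the lookup table a single bit per state, namely whether some edge of $P_{i,j}$ with at least one internal endpoint connects two vertices of equal parity. The test is well defined because an internal vertex $x$ has its layer equal to $\ell(b)+\delta$ for some boundary vertex $b$ of $P_{i,j}$ with a table-known offset $\delta$. An edge $\{x,y\}$ with $y$ internal then has parity difference determined locally. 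If $y$ is a boundary vertex, the difference is determined by the local BFS structure as well: the offset of $x$ is measured from the boundary vertex that is nearest in the local shortest-path sense, and $y$ participates in that comparison.
\item \emph{Edges between two micro-boundary vertices.} If this local argument is too fragile for boundary-to-boundary edges, we check them explicitly instead. There are only $O(n/\plglgn)$ micro-boundary labels, and planar sparsity bounds the number of such edges per piece. For each one we compute the absolute layers from the stored reference to a mini-boundary vertex plus the offset, and compare parities.
\end{itemize}

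\textbf{Mini-boundary and exceptional pieces.} For edges incident to mini-boundary vertices, the layers are stored explicitly, and we scan all $O(n/\plgn)$ mini-boundary occurrences and their incident edges. Two details need care here.
\begin{itemize}
\item A mini-boundary vertex can have large degree. We bound the scan by charging each edge to a boundary-to-boundary pair when both endpoints are boundary vertices. Otherwise the edge is handled inside the micro piece by the table, since the table state already knows the layers of the boundary vertices it contains relative to each other.
\item The exceptional pieces (the start mini piece and the micro pieces containing mini labels of mini-boundary vertices) have total size $o(n)$ and are checked edge by edge with explicit layer values.
\end{itemize}

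\textbf{Running time.} Summed over all pieces, the cost is one table lookup per micro piece, $O(n/\plglgn)$ in total, plus $o(n)$ explicit checks. This gives $o(n)$ time. If any check fails we report ``not bipartite''; otherwise the layer parities form a proper $2$-coloring.

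\textbf{Main obstacle.} The delicate step is the first bullet: arguing that the table state determines the parity of every locally checked edge, including edges from internal vertices to boundary vertices, without absolute layer values. The plan is to restrict the table bit to edges with at least one internal endpoint. This works because an internal vertex's layer is defined via local offsets from boundary vertices whose relative order, and hence relevant differences along local shortest paths, the state fixes. All boundary-to-boundary edges are moved to the explicit $o(n)$ check.
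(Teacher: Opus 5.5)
\textbf{There is a gap in the step you flag as the main obstacle.} Your overall structure matches the paper's proof: one table lookup per non-exceptional micro piece, plus edge-by-edge checks in the exceptional pieces, whose total size is $o(n)$. The justification you give for the key step, however, does not work.

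For an edge $\{x,y\}$ with $x$ internal, you write $\ell(x)=\ell(b)+\delta$ and conclude that the parity difference is ``determined locally''. But $y$ is anchored at its own boundary vertex $b'$ (or is a boundary vertex itself). So the parity of $\ell(x)-\ell(y)$ depends on the parity of $\ell(b)-\ell(b')$. That is exactly the information you say, in the same paragraph, the boundary ordering does not contain. The remark that $y$ ``participates in the comparison'' yields only $|\ell(x)-\ell(y)|\le 1$, not a parity. So your first bullet is asserted rather than proved, and the reasoning offered for it contradicts the motivation for your second bullet.

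\textbf{The missing observation is that this inequality already suffices.} In a BFS every edge $\{u,v\}$ satisfies $|\ell(u)-\ell(v)|\le 1$. Hence $\ell(u)\equiv\ell(v)\pmod 2$ holds iff $\ell(u)=\ell(v)$, and $G$ is bipartite iff no edge joins two vertices of the same layer.

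Equality of layers is a relative, order-with-ties property. Ties among boundary layers are stored in the boundary ordering. The partial BFS state also determines the local layers of internal vertices relative to the boundary, as it must in order to advance the BFS layer by layer. Therefore a single precomputed bit per table entry covers every edge of the micro piece, and boundary-to-boundary edges are decided directly by ties in the ordering. This observation is what underlies the paper's one-line claim that the table alone decides, per micro piece, whether it contains a non-tree edge with equal-parity endpoints.

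\textbf{Consequences for the rest of your plan.}
\begin{itemize}
\item Your explicit boundary-to-boundary check becomes unnecessary. It would only be a workable fallback if the table listed those edges, because scanning the neighborhoods of boundary vertices costs their total degree, which can be $\Theta(n)$.
\item For the same reason, drop the scan of mini-boundary occurrences and all their incident edges. As written it is not $o(n)$, and ``charging'' edges does not reduce scanning time.
\item That scan is also redundant: every edge lies in exactly one micro piece, and that piece is either handled by the table or checked explicitly as an exceptional piece.
\end{itemize}
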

\proof{In the succinct nested division, testing parity is local to the micro
    pieces. For each micro piece, the lookup table can determine whether it
    contains a non-tree edge whose endpoints have the same layer parity. As in
    Section~\ref{sec:bfs}, we handle micro pieces that contain mini-boundary
    vertices explicitly, by checking all edges contained in them one by one.
    Hence, the total time is proportional to the number of micro pieces, which
    is $o(n)$, plus the total size of the explicitly handled boundary cases,
which is $o(n)$. Thus, the total time is $o(n)$.}

\subsection{Tree Decomposition}

\body{A \emph{tree decomposition} of a graph $G=(V,E)$ is a tree $T$ together with a
mapping $B$ that assigns to every node $x$ of $T$ a subset $B(x)\subseteq V$,
called the \emph{bag} of $x$. We sometimes refer to a node $x$ and its bag
$B(x)$ interchangeably. The following properties have to hold. First, every
vertex of $G$ is contained in some bag. Second, for every edge $\{u,v\}\in E$,
there is a bag containing both $u$ and $v$. Third, for every fixed vertex $v\in
V$, the nodes $x$ with $v\in B(x)$ induce a connected subtree of $T$. The width
is the size of the largest bag minus one. Computing a tree decomposition of
minimum width is well-known to be hard. There is a folklore construction that
gives a tree decomposition of width $O(d)$ for triangulated plane graphs of
diameter $d$; a slightly more involved variant first appeared
in~\cite{Baker94}.}

\begin{lemma}
\label{lem:treedecomposition}
Let $\cD$ be a triangulated succinct nested division of a plane graph
$G=(V,E)$ of diameter $d$, and let a BFS tree $T=(V,E_T)$ be encoded in
$\cD$. Then $T$ and the induced dual spanning tree $\hat T$ implicitly
define a tree decomposition of $G$ of width $O(d)$. Moreover, we can enumerate
the nodes of the tree decomposition, in any standard tree traversal order of $\hat
T$, and output each corresponding bag in constant time per output vertex, using
$o(n)$ additional bits and not counting the space used for the output.
\end{lemma}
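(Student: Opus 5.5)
We use the folklore construction. The decomposition tree is $\hat T$ itself. For a dual vertex, i.e.\ a triangular face $f$ with corners $a,b,c$, the bag $B(f)$ is the union of the three root-to-vertex paths in $T$ from $s$ to $a$, $b$ and $c$. Each path has at most $d+1$ vertices because $T$ is a BFS tree and every layer is at most the eccentricity of $s$, which is at most $d$. So $|B(f)|\le 3(d+1)$ and the width is $O(d)$. If the BFS was computed on the original graph $G$ rather than on $G^\triangle$, we use the fact that added edges only shorten distances in $G^\triangle$. The bags are defined in terms of tree paths, so their size is bounded by the depth of $T$. This depth is at most the eccentricity of $s$ in $G$, and we interpret $d$ as the diameter of $G$.

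The three tree decomposition properties are checked as follows.
\begin{itemize}
\item Vertex coverage and edge coverage: every vertex lies on some face, and every edge $\{u,v\}$ of the triangulated graph lies on a face, whose bag contains both $u$ and $v$.
\item Connectivity: fix $v$ and let $F_v$ be the set of faces whose bag contains $v$. These are exactly the faces having a corner in the subtree $T_v$ of $T$ rooted at $v$. We show that $F_v$ induces a connected subtree of $\hat T$. Take two faces in $F_v$ and the $\hat T$-path between them. Consider the primal subtree $T_v$ together with the faces incident to it. By the interdigitating structure, the dual edges of $\hat T$ leaving this region cross non-tree edges with both endpoints outside $T_v$. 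A cleaner route is the standard argument: the faces incident to the connected primal subtree $T_v$ form a region whose complement is entered only through non-tree edges. Contracting $T_v$ to a single vertex shows that the faces around it are connected in the dual spanning tree $\hat T$ restricted to that region, because $\hat T$ restricted to the faces around a contracted tree vertex is connected, since no tree edges are crossed by $\hat T$.
\end{itemize}
Making the connectivity argument rigorous is the main obstacle, and we would state it as a small claim with a careful planar argument.

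The algorithmic part is enumeration. We traverse $\hat T$ by the Euler tour described at the end of Section~\ref{sec:bfs}, in the order required (preorder, postorder and so on via the Euler tour). Walking a face boundary takes $O(1)$ dart operations per step, so the tour runs in $O(n)$ time with $O(1)$ words of state. The triangulated succinct nested division and Lemma~\ref{lem:triangulate} make the dart operations on $G^\triangle$ available.

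When a face $f$ is output, we emit its bag as follows:
\begin{enumerate}
\item Compute $\mathrm{lca}(a,b)$ and $\mathrm{lca}(\mathrm{lca}(a,b),c)$ in $O(1)$ time using Theorem~\ref{thm:main}.
\item Walk parent pointers upward from $a$, $b$ and $c$, stopping at the appropriate LCAs so that no vertex is output twice.
\item Continue from the topmost LCA up to $s$.
\end{enumerate}
Each parent query costs $O(1)$ time, so the time is constant per output vertex. The workspace is the current dart, a few vertex labels and the Euler-tour state, which is $O(\log n)=o(n)$ bits. The layer queries from Theorem~\ref{thm:main} can be used as a sanity bound on the path lengths.
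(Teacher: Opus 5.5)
Your proposal matches the paper's proof. It uses the same bags (the union of the three root paths in $T$ to the corners of a face), the same $3(d+1)$ size bound from the depth of the BFS tree, and the same enumeration by traversing $\hat T$ directly and walking parent pointers; your LCA-based deduplication is a harmless refinement, and the paper does not verify the subtree property at all but cites the construction as folklore.

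Your contraction idea does close that step once you add the fact your sketch leaves out. Contracting $T_v$ to a vertex $x$ deletes no edge of $\hat T$ and preserves the faces, and the only tree edge left at $x$ is the edge from $v$ to its parent (none if $v=s$). So walking once around $x$ visits exactly the faces with a corner in $T_v$, with consecutive ones joined by duals of non-tree edges, i.e., by edges of $\hat T$, and cutting this cyclic sequence at that single tree edge leaves a walk in $\hat T$ through all of them.
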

\proof{We use $\hat T$ as the tree
of the tree decomposition. Thus, the nodes of the decomposition are the faces
of $G$. For a face $f$ with vertices $a,b,c$, let $B(f)$ be the union of the
three tree paths in $T$ from $r$ to $a$, from $r$ to $b$, and from $r$ to $c$.
Since $T$ is a BFS tree and $G$ has diameter $d$, each of these paths has
length at most $d$. Hence, every bag has size at most $3d+3$, and thus, the
width is $O(d)$. To output the decomposition, it is enough to traverse the dual
spanning tree $\hat T$. Whenever we visit a face $f$, we output the vertices on
the three corresponding root-to-face paths in $T$. Since we can traverse $\hat
T$ directly, we can output the bags in any common tree traversal order, e.g.,
preorder or postorder.}

\section{Generalization to Other Graph Classes}\label{sec:general}

\definition{A subgraph-closed graph class $\cG$ is called
    \emph{$f(n)$-separable} if every graph $G\in\cG$ on $n$ vertices has a
    balanced separator of size $O(f(n))$. We call $\cG$ \emph{separable} if it
    is $O(n^c)$-separable for some constant $0<c<1$. Just as for planar graphs,
the information-theoretic lower bound for distinguishing between any two
$n$-vertex graphs $G, H$ of a separable graph class is of order $O(n)$.}

\body{The succinct encoding of Blelloch and Farzan~\cite{BlellochF10}, as well
as the extension by Elberfeld et al.~\cite{ElberfeldKM25}, applies to arbitrary
separable graph classes. The main difference is in the polynomial degrees used
for the piece sizes and the corresponding boundary bounds. In particular, the
linear-time and $O(n)$-bit construction that we state for planar graphs in
Theorem~\ref{thm:main} does not necessarily hold for arbitrary separable graph
classes. Once the succinct nested division is available, however, all our
results apply. Moreover, our algorithms for computing and querying a BFS tree
do not use any planarity-specific properties. Thus, all results that only rely
on the succinct nested division and on the supported graph and tree operations
extend to separable graph classes. The results that explicitly use a plane
embedding, such as triangulation, interdigitating tree traversal, and separator
computations in plane graphs, remain restricted to plane graphs.}

\section{Constructing the Encoding of Theorem~\ref{thm:main}}\label{sec:construction}

\body{We give a brief outline of how the encoding of Theorem~\ref{thm:main} is
constructed and how the bounds of Corollary~\ref{cor:construction} are achieved.
Everything outlined in this section is known; we simply provide a short
summary.}

\body{Recall that the underlying data structure of Theorem~\ref{thm:main} is
    essentially the succinct encoding of Blelloch and
    Farzan~\cite{BlellochF10}. Its construction has three main ingredients: the
    computation of the nested divisions, the construction of the lookup table
    for micro pieces, and the construction of the translation mappings between
    global labels, mini labels, and micro labels. Additional data stored for
    boundary vertices is negligible in the bounds below. First, we discuss the
    construction of the nested division. Recall that standard planar
    $r$-division algorithms can compute divisions in linear time, but use
    $O(n\log n)$ bits of working space~\cite{Goodrich95} and that we use the
    relaxed definition of Elberfeld et al.~\cite{ElberfeldKM25}, where each
    piece has $O(r^{1/2+\varepsilon})$ boundary vertices, for any fixed
    $\varepsilon \in (0,1/2)$, instead of the standard $O(r^{1/2})$ bound. For
    this relaxed version, Kammer and Meintrup~\cite{KammerM22} give an
    $O(n)$-time and $O(n)$-bit construction for planar graphs; they also
    mention the construction of the succinct encoding of Blelloch and Farzan as
    an application. Recursively applying this construction gives the nested
    division within the claimed construction bounds. The lookup table
    contributes only $o(n)$ time and $o(n)$ space. Since micro pieces have size
    at most $r_1=\plglgn$, the table has size $o(n)$, including the additional
    bits needed to support the table-swap operations (\ref{def:ts}) we use. Since the table
    entries are so small, any polynomial-time precomputation for all entries
    can be done in $o(n)$ time total as well. It remains to discuss the
    translation mappings. These mappings are implemented using compressed
    indexable dictionaries. In particular, they use the compressed dictionary
    of Raman et al.~\cite{RamanRRS07}, which stores a set $S \subseteq [\ell]$
    in $\log \binom{\ell}{|S|} + o(\ell)$ bits while supporting rank and select
    queries in constant time. In our setting, the universe has size
    $\Theta(n)$, while the stored sets correspond to boundary vertices and
    piece ids and have size only $O(n/\plgn)$ or $O(n/\plglgn)$. Hence, all
    required compressed dictionaries use $o(n)$ bits in total. No deterministic
    linear-time construction of these compressed dictionaries is known, but
they can be constructed in expected linear
time~\cite{FeigenblatPS16,RamanRRS07}.}

\body{This yields the expected linear-time construction of the succinct
    encoding. If succinctness is not required, the compressed dictionaries can
    be replaced by standard indexable dictionaries, see,
    e.g.,~\cite{BaumannH19,Jacobson88}. These do not exploit the imbalance
    between the universe size and the stored set size; they store a bitvector
    of length $\ell$ together with auxiliary data structures using
    $\ell+o(\ell)$ bits, and can be constructed deterministically in linear
    time. With this replacement, the nested division is no longer succinct, but
compact, and can be constructed in deterministic $O(n)$ time using $O(n)$
bits.}

\newpage
\bibliography{main}
\newpage

\end{document}